\documentclass[a4,12pt]{article}%


\usepackage{amsmath}
\usepackage{amsfonts}
\usepackage{amssymb}


\newtheorem{theorem}{Theorem}[section]

\newtheorem{definition}[theorem]{Definition}
\newtheorem{example}[theorem]{Example}

\newtheorem{lemma}[theorem]{Lemma}

\newtheorem{proposition}[theorem]{Proposition}

\newtheorem{comment}[theorem]{Remark}

\newenvironment{proof}[1][Proof]{\textbf{#1.} }{\ \rule{0.5em}{0.5em}}
\newcommand{\E}{{\rm \bf E}}

\newcommand{\prob}{{\rm \bf P}}

\newcommand{\conv}{{\rm conv}}
\newcommand{\payoff}{{\rm payoff}}
\newcommand{\supp}{{\rm supp}}
\newcommand{\dN}{{\bf N}}

\newcommand{\dR}{{\bf R}}

\newcommand{\calC}{{\cal C}}
\newcommand{\calE}{{\cal E}}

\newcommand{\calH}{{\cal H}}
\newcommand{\calI}{{\cal I}}

\newcommand{\ep}{\varepsilon}

\newcommand{\calCmax}{\calC_{\mathrm{\small{max}}}}
\newcommand{\Picorr}{\Pi_{\mathrm{\small{corr}}}}
\newcommand{\Pistatcorr}{\Pi^{\mathrm{\small{stat}}}_{\mathrm{\small{corr}}}}
\newcommand{\Pipure}{\Pi^{\mathrm{\small{stat}}}_{\mathrm{\small{pure}}}}
\newcommand{\Sigmacorr}{\Sigma_{\mathrm{\small{corr}}}}
\newcommand{\Sigmastat}{\Sigma^{\mathrm{\small{stat}}}}
\newcommand{\Sigmastatcorr}{\Sigma^{\mathrm{\small{stat}}}_{\mathrm{\small{corr}}}}
\newcommand{\Sigmastatpure}{\Sigma^{\mathrm{\small{stat}}}_{\mathrm{\small{pure}}}}

\newcommand{\numbercellonga}[2]
{
\begin{picture}(60,20)(0,0)
\put(0,0){\framebox(60,20)}
\put(30,10){\makebox(0,0){#1}}
\put(50,13){\makebox(0,0){#2}}
\end{picture}
}

\newcounter{figurecounter}
\setcounter{figurecounter}{1}

\begin{document}

\title{Acceptable Strategy Profiles in Stochastic Games%
\thanks{The author thanks Eitan Altman for helping in identifying relevant references,
and acknowledges the support of the Israel Science Foundation, Grant \#323/13.}}

\author{Eilon Solan%
\thanks{The School of Mathematical Sciences, Tel Aviv
University, Tel Aviv 6997800, Israel. e-mail: eilons@post.tau.ac.il}}

\maketitle

\begin{abstract}
This paper presents a new solution concept for multiplayer stochastic games, namely, acceptable strategy profiles.
For each player $i$ and state $s$ in a stochastic game, let $w_i(s)$ be a real number.
A strategy profile is \emph{$w$-acceptable}, where $w=(w_i(s))$,
if the discounted payoff to each player $i$ at every initial state $s$ is at least $w_i(s)$,
provided the discount factor of the players is sufficiently close to 1.
Our goal is to provide simple strategy profiles that are $w$-acceptable for payoff vectors $w$ in which all coordinates are high.
\end{abstract}

\noindent
\textbf{Keywords:} Stochastic games, acceptable strategy profiles, automata.

\section{Introduction}

Shapley (1953) presented the model of \emph{stochastic games}, which are dynamic games in which the state variable
changes from stage to stage as a function of the current state and the actions taken by the players.
Shapley (1953) proved that the discounted value exists in two-player zero-sum stochastic games,
and provided an equation that the discounted value satisfies.

This seminal work led to an extensive research in several directions (see the surveys by, e.g., Neyman and Sorin (2003), Mertens, Sorin, and Zamir (2015),
Solan and Vieille (2015), Ja\'skiewicz and Nowak (2016a, 2016b), and Solan and Ziliotto (2016)), including
the study of the discounted value in games with general state and action sets,
the study of discounted equilibria in multiplayer stochastic games, and
the study of the robustness of equilibria.

A commonly studied robustness concept is that of uniform equilibrium.
A strategy profile is a \emph{uniform $\ep$-equilibrium} for $\ep \geq 0$ if it is an $\ep$-equilibrium
in the discounted game, provided the discount factor is sufficiently close to 1.
Thus, a strategy profile is a uniform $\ep$-equilibrium if it is an approximate equilibrium,
provided the players are sufficiently patient.

Progress in the study of the uniform equilibrium turned out to be slow,
existence of such a strategy profile was proven only in special cases
(see, e.g., Mertens and Neyman (1981), Solan (1999), Vieille (2000a, 2000b), Solan and Vieille (2001),
Simon (2007, 2012, 2016), Flesch, Thuijsman, and Vrieze (2007), and Flesch, Schoenmakers, and Vrieze (2008, 2009)),
and the strategy profiles that are uniform $\ep$-equilibrium are usually quite complex.

Players do not always adopt complex strategies.
Herbert Simon, one of the founding fathers of decision-making under uncertainty,
believed that human behavior follows simple
rules and coined the term \emph{bounded rationality}.
Warren Buffett, the American business magnate,
is quoted as saying that ``The business schools reward difficult complex behavior more than simple behavior, but simple behavior is more effective.''
When Jack Welsh, retired CEO of General Electric,
was asked ``what makes an effective organization?'',
he replied that
``for a large organization to be effective, it must be simple.''

The present paper proposes a new solution concept for stochastic games that combines simplicity in behavior with relatively high payoffs.
Let $w = (w_i(s))$ be a vector, where $i$ ranges over all players and $s$ ranges over all states.
A strategy profile in a stochastic game is \emph{$w$-acceptable} if
when the players follow it, for every discount factor sufficiently close to 1,
the discounted payoff of each player $i$ is at least $w_i(s)$ when the initial state is $s$.
Thus, when the players follow such a strategy profile,
they forgo the option to profit by deviation in order to guarantee a reasonable high payoff for each player.
A strategy profile is \emph{min-max $\ep$-acceptable}
if it is $w$-acceptable for the vector $w = (w_i(s))$ that is defined by $w_i(s) := v^1_i(s)-\ep$,
where $v^1_i(s)$ is the uniform min-max value of player~$i$ at the initial state $s$.
By Neyman (2003), $v^1_i(s)$ is the amount that player~$i$ can uniformly guarantee when the other players cooperate to lower his payoff.
Accordingly, a min-max $\ep$-acceptable strategy profile guarantees (up to $\ep$) to each player an amount that is at least
what the player could have obtained in the worst case, provided he is sufficiently patient.

In their study of correlated equilibrium, Solan and Vieille (2002) constructed a
min-max $\ep$-acceptable strategy profile in every multiplayer stochastic game for every $\ep > 0$.
Their construction uses the technique of Mertens and Neyman (1981) for designing an optimal strategy in two-player
zero-sum stochastic games, and in particular is history dependent.

Our goal in this paper is the construction of simple strategy profiles that are min-max $\ep$-acceptable,
where simplicity is measured by the size of the automata that are neded to implement the strategies of the players.

A na\"ive suggestion for a stationary min-max $\ep$-acceptable strategy profile is a stationary discounted equilibrium,
for some discount factor sufficiently close to 1.
As we now explain, this approach is bound to fail.
The discounted payoff that corresponds to a stationary strategy profile
is the weighted average of the payoffs that are received in the various states,
where the weight of a state is equal to the discounted time that the play spends in that state.
A discounted equilibrium yields a high discounted payoff to all players,
which implies that this weighted average is high.
It might happen that while the average payoff of all players is high,
some players get high payoff in some states,
while other players get high payoff in other states.
When we fix a $\lambda$-discounted equilibrium and we calculate the payoff according to a discount factor $\lambda'$ that goes to 1,
the weights of the various states change,
and there is no guarantee that the weighted average payoff of all players remains high.
This phenomenon in fact happens, as can be seen in Example \ref{example:10} below.

We prove the existence of a min-max $\ep$-acceptable strategy profile, in which the strategy of each player
can be implemented by an automaton whose number of states is at most the number of states in the stochastic game times the number of players.
The proof is constructive and identifies (at least) one such strategy profile.

Another view on the concept of $w$-acceptability stems from the folk theorem.
The folk theorem for repeated games states that under proper technical conditions,
every feasible and individually rational payoff vector is
an equilibrium payoff.
Solan (2001) extended this result for stochastic games when cosidering etensive-form correlated equilibria
rather than Nash equilibria.
The identification of the set of feasible and individually rational payoffs in multiplayer stochastic games is open.
A strategy profile is min-max $\ep$-acceptable if it generates a feasible and $\ep$-individually rational payoff vector.
Thus, our work identifies simple strategy profiles that support $\ep$-individually rational payoff vectors.

Identifying individually rational strategy profiles in the
discrete-time game is useful for continuous-time stochastic games.
Indeed, an $\ep$-individually rational strategy profile in the discrete-time game can be transformed into an $\ep$-equilibrium in the continuous-time game,
see Neyman (2012).

The paper is organized as follows.
The model of stochastic games, the concept of acceptable strategy profiles, the main result, a discussion, and open problems
appear in Section~\ref{section:model}.
The proof of the main result appears in Section~\ref{section:proof}

\section{Model and Main Results}
\label{section:model}

\subsection{The Model of Stochastic Games}

A multiplayer \emph{stochastic game} is a vector $\Gamma = (I,S,(A_i)_{i \in I}, (u_i)_{i \in I}, q)$ where
\begin{itemize}
\item   $I = \{1,2,\ldots,|I|\}$ is a finite set of players.
\item   $S$ is a finite set of states.
\item   $A_i$ is a finite set of actions available to player $i$ at each state.%
\footnote{We could have assumed that the action set of a player depends on the current state.
This would have complicated the definition of an automaton that implements a strategy,
hence we prefer to assume that the action set is independent of the state.}
Denote by $A := \times_{i \in I} A_i$ the set of all action profiles.
\item   $u_i : S \times A \to \dR$  is player $i$'s payoff function.
We assume w.l.o.g.~that the payoffs are bounded between -1 and 1.
\item   $q : S \times A \to \Delta(S)$ is a transition function, where $\Delta(X)$ is the set of probability distributions over $X$,
for every nonempty finite set $X$.
\end{itemize}

The game is played as follows.
The initial state $s^{1} \in S$ is given.
At each stage $n \in \dN$, the current state $s^{n}$ is announced to the players. Each
player~$i$ chooses an action $a_{i}^{n}\in A_{i}$; the action profile
$a^{n}=(a_{i}^{n})_{i\in N}$ is publicly announced, $s^{n+1}$ is drawn
according to $q(\cdot\mid s^{n},a^{n})$ and the game proceeds to stage $n+1$.

A \emph{correlated mixed action} is an element of $\Delta(A)$.
We extend the domain of $q$ and $(u_i)_{i \in I}$ to correlated mixed actions in a multilinear fashion:
for every state $s \in S$ and every correlated mixed action $\alpha \in \Delta(A)$ we define
\begin{eqnarray}
q(s,\alpha) := \sum_{a \in A} \alpha[a] q(s,a),
\end{eqnarray}
and
\begin{eqnarray}
u_i(s,\alpha) := \sum_{a \in A} \alpha[a] u_i(s,a), \ \ \ \forall i \in I.
\end{eqnarray}

Let $H := \cup_{n \in \dN} \left((S \times A)^{n-1} \times S\right)$ be the set of \emph{finite histories}%
\footnote{By convention, the set $(S \times A)^0$ contains only the empty history.}
and $H^\infty := (S \times A)^\infty$ be the set of \emph{plays}.
We assume perfect recall.
Accordingly, a (behavior) \emph{strategy} of player $i$ is a function $\sigma_i : H \to \Delta(A_i)$.
A strategy $\sigma_i$ of player~$i$ is \emph{pure} if for every finite history $h^n \in H$,
the support of the mixed action $\sigma_i(h^n)$ contains one action.
We note that the superscript $n$ of a history $h^n$ always denotes its length, and the last state of a finite history $h^n$ is always denoted by $s^n$.
Denote by $\Sigma_i$ the set of all strategies of player $i$,
by $\Sigma := \times_{i \in I} \Sigma_i$ the set of all strategy profiles,
and by $\Sigma_{-i} := \times_{j \neq i} \Sigma_j$ the set of all strategy profiles of all players except player $i$.

A \emph{correlated strategy} is a function $\tau : H \to \Delta(A)$.
The set of all correlated strategies is denoted $\Sigmacorr$.
We note that every strategy profile is in particular a correlated strategy.

A class of simple strategies is the class of \emph{stationary strategies}.
Those are strategies in which the choice of the player at each stage depends only on the current state,
and not on previously visited states or on past choices of the players.
A stationary strategy of player $i$ can be identified with an element of $(\Delta(A_i))^S \subset \dR^{|S| \times |A_i|}$,
and will be denoted $x_i = (x_i(s))_{s \in S}$.
A strategy profile $\sigma = (\sigma_i)_{i \in I}$ is \emph{stationary}
if for every player $i \in I$ the strategy $\sigma_i$ is stationary.
The set of all stationary strategy profiles is denoted $\Sigmastat$ and
the set of all stationary pure strategy profiles, that is, strategy profiles that are composed of pure stationary strategies,
is denoted $\Sigmastatpure$.
A stationary correlated strategy is identified with an element of $(\Delta(A))^S$.
The set of all stationary correlated strategies is denoted $\Sigmastatcorr$.

We will endow $H^\infty$ with the $\sigma$-algebra generated by finite cylinders,
and denote by $\calH^n$ the algebra generated by all finite histories of length $n$.
This algebra represents the information that the players possess at stage $n$.
Every initial state $s^1 \in S$ and every correlated strategy $\tau\in \Sigmacorr$
induce a probability distribution $\prob_{s^1,\tau}$ over the set of plays $H^\infty$.
Denote the corresponding expectation operator by $\E_{s^1,\tau}$.

\subsection{Acceptable Strategy Profiles}

For every initial state $s^1 \in S$,
every correlated strategy $\tau \in \Sigmacorr$,
every player $i \in I$,
and every discount factor $\lambda \in [0,1)$ the \emph{$\lambda$-discounted payoff} of player~$i$
is
\[ \gamma^{\lambda}_i(s^1;\tau) := \E_{s^1,\tau}\left[ (1-\lambda) \sum_{n=1}^\infty \lambda^{n-1} u_i(s^n,a^n)\right]. \]

The main concept that we study in this paper is the concept of acceptable strategy profiles.

\begin{definition}
\label{def:acceptable}
Let $w \in \dR^{S \times I}$. A strategy profile
$\sigma$ is \emph{$w$-acceptable} at the initial state $s^1$ if there exists
a real number $\lambda_0 \in [0,1)$ such that for every player $i \in I$ and every  $\lambda \in [\lambda_0,1)$,
\begin{equation}
\label{equ:acceptable} \gamma^{\lambda}_i(s^1,\sigma) \geq w_i(s^1), \ \ \ \forall i \in I.
\end{equation}
The strategy profile is \emph{$w$-acceptable} if it is $w$-acceptable at all initial states.
In this case we say that the vector $w$ is \emph{acceptable}.
\end{definition}
In words, a strategy profile $\sigma$ is $w$-acceptable if whenever the players are sufficiently patient it yields each player $i$ at least $w_i(s^1)$,
for every initial state~$s^1$.

A natural question that arises is which vectors $w$ are acceptable.
A vector $w$ is a \emph{uniform equilibrium payoff}%
\footnote{The concept that we define here refers to uniformity in the discount factor only.
A stronger notion is defined in Mertens and Neyman (1981).
We refer to this stronger notion in Section \ref{comment:uniform} below.}
if for every $\ep > 0$ there exists a real number $\lambda_0 \in [0,1)$ and a strategy profile $\sigma^\ep$
such that for every initial state $s^1 \in S$, every player $i \in I$, and every discount factor $\lambda \in [\lambda_0,1)$ we have
$|\gamma^{\lambda}_i(s^1;\sigma^\ep) - w_i(s^1)| < \ep$ and
\[  \gamma^{\lambda}_i(s^1;\sigma_i,\sigma^\ep_{-i})-\ep \leq \gamma^{\lambda}_i(s^1;\sigma^\ep), \ \ \ \forall \sigma_i \in \Sigma_i. \]
If $w$ is a uniform equilibrium payoff, then for every $\ep > 0$ the vector $w-\ep := (w_i(s)-\ep)_{i \in I, s \in S}$
is acceptable.
To date it is not known whether every multiplayer stochastic game admits a uniform equilibrium payoff.

The \emph{$\lambda$-discounted min-max value of player $i$ at the initial state $s^1$} is given by
\begin{equation}
\label{equ:minmax}
v^{\lambda}_i(s^1) := \min_{\sigma_{-i} \in \Sigma_{-i}} \max_{\sigma_i \in \Sigma_i} \gamma^{\lambda}_i(s^1;\sigma_i,\sigma_{-i}).
\end{equation}
The interpretation of the min-max value is that the other players can ensure that player~$i$'s payoff will not be above his min-max value,
and they cannot lower his payoff further.
Because for every fixed discount factor $\lambda \in [0,1)$ the $\lambda$-discounted payoff is a continuous function of the strategies of the players,
the maxima and minimum in (\ref{equ:minmax}) are attained.
It is well known (see Neyman, 2003) that the limit
\[ v^1_i(s^1) := \lim_{\lambda \to 1} v^{\lambda}_i(s^1) \]
exists for every player $i \in I$ and every initial state $s^1 \in S$.
The quantity $v^1_i(s)$ is called the \emph{uniform min-max value} of player $i$ at state $s$.

For every $\ep > 0$, every initial state $s^1 \in S$, and every strategy profile $\sigma_{-i}$ of the other players,
there exists $\lambda_0 \in [0,1)$ and a strategy $\sigma^\ep_{i}$ of player~$i$ such that
\[ \gamma_i^{\lambda}(s^1,\sigma^\ep_i,\sigma_{-i}) \geq v_i^1(s^1) - \ep, \ \ \ \forall {\lambda} \in [\lambda_0,1). \]
It is therefore natural to ask whether there are strategy profiles that ensure that all players receive at least their uniform min-max values.

\begin{definition}
Let $\ep \geq 0$.
A strategy profile $\sigma$ is \emph{min-max $\ep$-acceptable} if for every player $i \in I$, every initial state $s^1 \in S$,
and every discount factor $\lambda$ sufficiently close to 1,
we have $\gamma_i^{\lambda}(s^1,\sigma) \geq v^1_i(s^1)-\ep$.
\end{definition}

Since each player $i$ can get at least $v_i^1(s^1)-\ep$, provided he is sufficiently patient,
such a strategy profile guarantees for all players the minimal amount that they would agree to receive in an equilibrium.

A by-product of the study of Solan and Vieille (2002) on extensive-form correlated equilibria in stochastic games is
that there always exists a min-max $\ep$-acceptable strategy profile.
The construction of Solan and Vieille (2002) uses the technique of Mertens and Neyman (1981),
and the acceptable strategy profiles are complex and history dependent.
In this paper we ask whether there are \emph{simple} min-max $\ep$-acceptable strategy profile.

We first identify two classes of stochastic games, namely, Markov decision processes and absorbing games,
in which there are stationary min-max $\ep$-acceptable strategy profiles.
We do not know whether stationary min-max $\ep$-acceptable strategy profiles exist in every multiplayer stochastic game.

Blackwell (1962) proved that in stochastic games with a single player ($|I|=1$) there is a pure stationary strategy $\sigma_1$
and $\lambda_0 \in [0,1)$
that satisfy
\[ \gamma_i^{\lambda}(s^1,\sigma_1) \geq v_i^1(s^1) - \ep, \ \ \ \forall {\lambda} \in [\lambda_0,1), \forall s^1 \in S. \]
It follows that for every stochastic game with a single player there is a pure and stationary min-max $\ep$-acceptable strategy,
for every $\ep > 0$.

A state $s \in S$ is \emph{absorbing} if $q(s \mid s,a) = 1$ for
every action profile $a \in A$. An \emph{absorbing game} is a
stochastic game with a single nonabsorbing state.
By Solan (1999, Theorem 4.5) it follows that for every absorbing game there is a stationary min-max $\ep$-acceptable strategy profile,
for every $\ep > 0$.

\subsection{Automata and Strategies Implemented by Automata}

A common way to model a decision maker with bounded computational capacity is by an automaton,
which is a finite state machine whose output depends on its current state,
and whose evolution depends on the current state and on its input
(see, e.g., Neyman (1985) and Rubinstein (1986)).
Formally, an {\em automaton} is given by
(1) a finite state space $Q$,
(2) a finite set $In$ of inputs,
(3) a finite set $Out$ of outputs,
(4) an output function $f : Q \to Out$,
(5) a transition function $g : Q \times In \to \Delta(Q)$, and
(6) an initial state $q^* \in Q$.

Denote by $q^n$ the automaton's state at stage $n$.
The automaton starts in state $q^1 = q^*$, and at every stage $n \in \dN$, as a
function of the current state $q^n$ and the current input $i^n$,
the output of the automaton $o^n = f(q^n)$ is determined, and the automaton moves to a
new state $q^{n+1} = g(q^n,i^n)$.

The {\em size} of an automaton $P$ is the number of states in $Q$.
Below we will use strategies that can be implemented by automata;
in this case the size of the automaton measures the complexity of the strategy.

Consider a stochastic game and fix a player $i \in I$.
An automaton whose set of inputs is the Cartesian product of the set of action profiles and the set of states,
and the set of outputs is the set of mixed actions of player $i$, that is, $In = A \times S$ and $Out = \Delta(A_i)$,
can implement a behavior strategy of player $i$.
Indeed, at every stage $n$, the strategy
plays the mixed action $f(q^n)$, and the new state of the automaton
$q^{n+1}=g(q^n,a^n,s^{n+1})$ depends on its current state $q^n$, the action profile $a^n$ played at stage $n$, and the new state of the game $s^{n+1}$.

Similarly, an automaton can implement a correlated strategy;
In this case the set of outputs of the automaton is the set of correlated mixed actions: $Out = \Delta(A)$.

To distinguish between the state of the game and the state of the automaton we refer to the latter as \emph{automaton-states}.

\subsection{The Main Result}

We can now present our two main results.
The first identifies an upper bound to the size of the smallest automaton that
implements a min-max $\ep$-acceptable strategy profile.
In Section \ref{subsection:implication} we comment on the relation between the min-max $\ep$-acceptable strategy profile
that we construct and the study of extensive-form correlated equilibrium in stochastic games.
\begin{theorem}
\label{theorem:1}
For every stochastic game and every $\ep > 0$ there exists a min-max $\ep$-acceptable strategy profile such that each of the strategies composing the profile
can be implemented by an automaton with size $|S| \times |I|$.
\end{theorem}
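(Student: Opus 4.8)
The plan is to reduce the theorem to realizing a single \emph{feasible and individually rational} long-run payoff vector as the patient limit of the payoffs of a bounded automaton. By Solan and Vieille (2002) there is a min-max $\ep/2$-acceptable profile, so for every initial state $s$ the set $F(s)$ of feasible long-run payoff vectors meets the orthant $D(s) := \{w \in \dR^{I} : w_i \geq v^1_i(s) - \ep/2 \ \forall i \in I\}$. I would fix such a target and then build an automaton whose automaton-state is a pair $(s,\ell) \in S \times \{1,\dots,|I|\}$: the first coordinate tracks the current state of the game (legitimate, since the automaton reads $s^{n+1}$ as part of its input), while the second coordinate is a \emph{phase} that selects which of $|I|$ pure stationary profiles is currently played. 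This gives the size bound $|S| \times |I|$ at once, provided the decomposition below uses at most $|I|$ profiles.

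The next step is to represent the target as a time-average of at most $|I|$ pure stationary profiles. Viewing all players as a single controller choosing action profiles, $F(s)$ is the convex hull of the long-run payoff vectors $r(x)$ of profiles $x \in \Sigmastatpure$. I would take the target $w^*$ to be the maximizer of the utilitarian sum $\sum_{i} w_i$ over $F(s) \cap D(s)$. Since $F(s)$ is bounded, moving in the all-ones direction must eventually leave $F(s)$, so $w^*$ lies on a proper face of $F(s)$ of dimension at most $|I|-1$; by Carathéodory applied within that face, $w^* = \sum_{\ell=1}^{|I|} \beta_\ell\, r(x^\ell)$ for pure stationary profiles $x^1,\dots,x^{|I|}$ and weights $\beta_\ell \geq 0$ with $\sum_\ell \beta_\ell = 1$. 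These $|I|$ profiles are the outputs attached to the $|I|$ phases, which is exactly what keeps the phase count at $|I|$ rather than $|I|+1$.

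To turn this convex combination into an actual long-run average I would let the phase evolve cyclically, switching $\ell \to \ell+1$ with a small probability $p_\ell$ at each stage, where the required coin is read off the publicly observed outcome (the realized action profile $a^n$ together with $s^{n+1}$); after perturbing each $x^\ell$ to be slightly mixed, public coins of any bias become available. Because every player observes the same public signal, all automata remain synchronized on the common phase, so the object is a genuine uncorrelated strategy profile requiring no correlation device. With the $p_\ell$ fixed and small, the induced chain on $S \times \{1,\dots,|I|\}$ spends a fraction close to $\beta_\ell$ of the time in phase $\ell$ and, within each phase, its state-marginal is close to the invariant distribution of $x^\ell$; hence its Abel limit, which equals $\lim_{\lambda\to1}\gamma^{\lambda}_i(s,\sigma)$, is within $\ep/2$ of $w^*$. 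Taking the $p_\ell$ small enough would then give $\gamma^{\lambda}_i(s,\sigma) \geq v^1_i(s) - \ep$ for every $i$ and every $\lambda$ sufficiently close to $1$.

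The hardest part is that acceptability is demanded from every initial state at once, whereas both the target $w^*=w^*(s)$ and the bar $v^1_i(s)$ genuinely depend on $s$. When the game is communicating, the construction uses one irreducible product chain and one target, and start-independence of the Abel limit closes the argument cleanly. In general the product chain is multichain, and I would need the recurrent class reached from $(s,\ell_0)$ to carry a per-player average dominating $v^1_i(s)-\ep$; achieving this with a single bounded automaton forces the phase dynamics and the profiles $x^\ell$ to respect the ergodic decomposition of the game, routing different initial states into different recurrent regions, each with its own locally feasible individually rational payoff. Reconciling this state-dependent routing with a global budget of only $|I|$ phases, and verifying that the two-scale limit (rare switching, then $\lambda\to1$) stays valid class by class, is where the real work lies.
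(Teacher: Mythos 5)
Your construction for the case of a single communicating region is essentially the paper's Proposition \ref{proposition:4}: automaton-states $(s,\ell)$ with $\ell$ a phase selecting one of at most $|I|$ pure stationary profiles (Carath\'eodory on a face, matching the paper's $L \leq |I|$), rare phase-switching at rate tuned to the weights $\beta_\ell$, and the two-scale limit giving an Abel-limit payoff close to the target. Your explicit synchronization device --- extracting public coins from slightly perturbed mixed actions so that all players' automata advance phases together --- is a detail the paper glosses over (it allows random automaton transitions without saying how independent randomizations stay synchronized), and it is a genuine point in your favor. Likewise, your appeal to Solan--Vieille (2002) for nonemptiness of $F(s) \cap D(s)$ and to the fact that feasible long-run payoffs are convex combinations of pure stationary ones parallels the paper's Propositions \ref{theorem:sv} and \ref{prop:mt}.

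However, the last paragraph of your proposal concedes, rather than closes, the part of the argument that carries the theorem. Acceptability from every initial state in a multichain game is not a routing afterthought; it is where all of the paper's machinery lives, and you have no substitute for it. Concretely, your scheme fails whenever the play starts in a region whose locally feasible payoffs cannot dominate $v^1(s^1)-\ep$, so the play must \emph{leave} that region --- and then you need (i) a guarantee that the continuation min-max value does not drop in expectation along the way, and (ii) a way to realize the required exit distribution with a bounded automaton. The paper gets (i) from the auxiliary one-shot games $G(s)$ and their equilibria $E(s)$, which satisfy $v^1(s) \leq \sum_{s'} q(s' \mid s, x(s)) v^1(s')$ (Eq.~(\ref{equ:70})), together with condition (SV.3), making $(v^1_i(s^{k_n}))_n$ a submartingale; it gets (ii) from the type A/type B dichotomy of maximal communicating sets under $E$, Lemma \ref{lemma:14} (for every exit played with positive limit probability under an $\ep$-perturbation of $E$ there is a single-player modification keeping the play inside $C$ --- proved via the product structure of the perturbed mixed actions), and Proposition \ref{proposition:3}, where a Brouwer fixed-point argument tunes the perturbation rates $\eta^{(l)}$ so the automaton reproduces the limiting exit distribution of the Solan--Vieille profile; finally a separate lemma shows the play reaches a type-A set almost surely. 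Your phrase ``routing different initial states into different recurrent regions, each with its own locally feasible individually rational payoff'' gestures at exactly this, but without the submartingale structure there is no reason the payoff available where you land dominates the bar $v^1_i(s^1)-\ep$ set at the start. As written, the proposal proves the theorem only for communicating (unichain) games and leaves the general case open.
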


Our second main result states that there exists a stationary min-max $\ep$-acceptable correlated strategy.
Such a strategy can be implemented by an automaton of size $|S|$.

\begin{theorem}
\label{theorem:2}
For every stochastic game and every $\ep > 0$ there exists a stationary min-max $\ep$-acceptable correlated strategy.
\end{theorem}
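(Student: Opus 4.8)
The plan is to reduce the statement to a purely undiscounted (long‑run average) problem and then to solve that problem by passing to occupation measures, where the payoffs become linear and the limit $\lambda\to1$ can be taken safely. First I would record a reduction. For a fixed stationary correlated strategy $\tau\in\Sigmastatcorr$ the map $\lambda\mapsto\gamma^\lambda_i(s^1;\tau)$ is a bounded rational function, hence continuous at $\lambda=1$ with $\lim_{\lambda\to1}\gamma^\lambda_i(s^1;\tau)=\overline\gamma_i(s^1;\tau)$, the Ces\`aro payoff of the Markov chain that $\tau$ induces. Consequently it suffices to exhibit a single $\tau$ with $\overline\gamma_i(s;\tau)\ge v^1_i(s)$ for every $i$ and $s$: since $v^1_i(s)>v^1_i(s)-\ep$, continuity in $\lambda$ then gives $\gamma^\lambda_i(s;\tau)\ge v^1_i(s)-\ep$ for all $\lambda$ close enough to $1$, uniformly over the finitely many pairs $(i,s)$.

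For the undiscounted problem the natural first attempt is a vanishing–discount limit of defensive profiles. For each $\lambda$ let $x^{*,\lambda}=(x^{*,\lambda}_i)_{i\in I}$ be the stationary profile in which each player $i$ plays a stationary optimal strategy of the $\lambda$‑discounted zero‑sum game in which $i$ maximizes $\gamma^\lambda_i$ against the coalition of the other players (such stationary optima exist by Shapley). Since $x^{*,\lambda}_i$ guarantees $\gamma^\lambda_i(s;x^{*,\lambda}_i,\sigma_{-i})\ge v^\lambda_i(s)$ against every $\sigma_{-i}$, the profile $x^{*,\lambda}$ satisfies $\gamma^\lambda_i(s;x^{*,\lambda})\ge v^\lambda_i(s)$ simultaneously for all $i$ and $s$. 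The difficulty, and the reason the naive limit fails, is that the Ces\`aro payoff is not continuous in the strategy: the limit $\overline x=\lim_\lambda x^{*,\lambda}$ may induce a chain with a different recurrent structure (vanishing transition probabilities can create new recurrent classes), so $\overline\gamma_i(s;\overline x)$ need not inherit the bound $v^\lambda_i\to v^1_i$.

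To circumvent this I would pass to occupation measures, which linearizes the payoff and is stable under the limit. For initial state $s$ let $\mu^\lambda_s\in\Delta(S\times A)$ be the normalized discounted occupation measure induced by $x^{*,\lambda}$, so that $\gamma^\lambda_i(s;x^{*,\lambda})=\sum_{(s',a)}\mu^\lambda_s(s',a)\,u_i(s',a)$ is linear in $\mu^\lambda_s$, and the $\mu^\lambda_s$ lie in the polytope cut out by the discounted flow constraints. Taking $\lambda\to1$ along a subsequence, each $\mu^\lambda_s$ converges to a measure $\mu^*_s$ satisfying the invariant (average) flow constraints, i.e. an average occupation measure realizable by a stationary strategy from $s$. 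Because the payoff is linear and $v^\lambda_i(s)\to v^1_i(s)$, the bound passes to the limit: $\sum_{(s',a)}\mu^*_s(s',a)\,u_i(s',a)\ge v^1_i(s)$ for every $i$ and $s$.

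The remaining, and I expect hardest, step is to realize the whole family $\{\mu^*_s\}_{s\in S}$ by one stationary correlated strategy $\tau$, rather than by a different strategy for each initial state. On the recurrent classes the measures $\mu^*_s$ prescribe a consistent correlated action at each state, which defines $\tau$ there; the work is to choose $\tau$ on the transient states so that, started from any $s$, the chain reaches the recurrent classes in proportions keeping $\sum_{(s',a)}\mu(s',a)\,u_i(s',a)\ge v^1_i(s)$ for every player. I would attempt this by showing the limiting measures are jointly consistent, since they all arise from the single family $x^{*,\lambda}$, and by exploiting superharmonicity of $v^1_i$ along the induced transitions to control the routing. Once such a $\tau$ is obtained, the reduction in the first paragraph completes the proof, and $\tau$ is implementable by an automaton of size $|S|$ because it only needs to output the correlated action attached to the current state.
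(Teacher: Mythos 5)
Your reduction in the first paragraph is sound (for a fixed stationary correlated $\tau$ the discounted payoff is rational in $\lambda$, so an undiscounted bound with slack yields discounted acceptability, uniformly over the finitely many pairs $(i,s)$), and the passage to occupation measures is the right linearizing device. But two things go wrong. First, a repairable error: your defensive profile does not do what you claim. A stationary optimal strategy $x^{*,\lambda}_i$ of the zero-sum game between player $i$ and the coalition guarantees only the \emph{max-min} value $\max_{\sigma_i}\min_{\sigma_{-i}}\gamma^\lambda_i$, which in games with three or more players can be strictly below the min-max value $v^\lambda_i(s)=\min_{\sigma_{-i}}\max_{\sigma_i}\gamma^\lambda_i(s;\sigma_i,\sigma_{-i})$ of Eq.~(\ref{equ:minmax}), because the punishing coalition mixes independently and cannot correlate; there is in general \emph{no} single strategy of player $i$ guaranteeing $v^\lambda_i$ against all $\sigma_{-i}$. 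The fix is to take instead a stationary $\lambda$-discounted equilibrium $x^\lambda$ (Fink, 1964; Takahashi, 1964), for which $\gamma^\lambda_i(s;x^\lambda)=\max_{\sigma_i}\gamma^\lambda_i(s;\sigma_i,x^\lambda_{-i})\geq v^\lambda_i(s)$ holds for all $i$ and $s$ simultaneously.

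The genuine gap is the gluing step, which is not a technical remainder but the entire content of the theorem, and the tools you name do not deliver it. Your joint-consistency observation is correct but self-defeating: since each $x^{*,\lambda}$ is stationary, $\mu^\lambda_s(a\mid t)=x^{*,\lambda}(a\mid t)$ for every initial state $s$, so in the limit the only canonical stationary candidate is the limit profile $\bar x=\lim_\lambda x^{*,\lambda}$ --- precisely the naive limit you (and the paper's Example \ref{example:10}) already rejected. The limit measure $\mu^*_s$ is indeed invariant for the kernel of $\bar x$, but it is typically \emph{not} the occupation measure of $\bar x$ started at $s$: the weights $\mu^*_s$ places on the recurrent classes of $\bar x$ encode the relative speeds of the vanishing transition probabilities under $x^{*,\lambda}$, whereas the chain of $\bar x$ started at $s$ is absorbed in whichever class it reaches first. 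To control the routing one must reinstate small exit probabilities \emph{inside} the stationary strategy, tuned so that from every initial state the chain leaves each class with prescribed probabilities, and one must prove that suitable exit distributions exist whose expected continuation min-max value is at least $v^1_i(s)$ for \emph{all} players simultaneously. The ``superharmonicity of $v^1_i$'' you invoke is not available; what is needed is the approximate \emph{sub}martingale property of the min-max values, which holds for the specially constructed Solan--Vieille profile (Condition (SV.3) of Proposition \ref{theorem:sv}) but not for arbitrary limits of discounted equilibria or defensive profiles --- this is why the paper routes its argument through communicating sets under $E$, the type A/B dichotomy, Lemma \ref{lemma:14}, and the exit-tuning fixed-point argument of Proposition \ref{proposition:3}, with the stationary correlated version assembled in Section \ref{proof:2}. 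A minor additional point: your target $\overline\gamma_i(s;\tau)\geq v^1_i(s)$ with no $\ep$ is likely unattainable, since any stationary strategy that must leave a set places positive weight on exit actions and pays an $\ep$-cost; aiming for $v^1_i(s)-\ep/2$ costs nothing and is what the theorem requires.
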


The existence of an extensive-form correlated uniform equilibrium in discrete-time stochastic games (Solan and Vieille, 2002)
was used by Neyman (2012) to show the existence of a Nash uniform equilibrium in stochastic games in continuous time.
If the correlated strategy that underlies the extensive-form correlated uniform equilibrium is stationary
(rather than history dependent), the construction of Neyman (2012) becomes somewhat simpler.
Theorem \ref{theorem:2} therefore simplifies the construction in Neyman (2012).

\subsection{Discounted Equilibrium and Acceptable Strategy Profiles}

A strategy profile $\sigma^\lambda$ is a \emph{$\lambda$-discounted equilibrium}
if for every initial state $s \in S$ and every player $i \in I$ we have
\[ \gamma^\lambda(s;\sigma^\lambda) \geq \gamma^\lambda(s;\sigma_i,\sigma^\lambda_{-i}), \ \ \ \forall \sigma_i \in \Sigma_i. \]
It is well known (see Fink (1964) or Takahashi (1964))
that a $\lambda$-discounted equilibrium in stationary strategies exists in every stochastic game,
though it usually depends on the discount factor.
As the following example shows, a strategy profile that is a $\lambda$-discounted equilibrium for a specific $\lambda$
may yield some players low payoff when $\lambda$ changes.
This example shows in particular that a $\lambda$-discounted equilibrium and
a limit of $\lambda$-discounted equilibria as $\lambda$ goes to 1 need not be min-max $\ep$-acceptable.

\begin{example}
\label{example:10}
Consider the two-player absorbing game that appear in Figure \arabic{figurecounter}
and was studied by Sorin (1986).
There are two absorbing states with payoffs $(0,1)$ and $(2,0)$ respectively,
and in the initial state $s_0$, which is nonabsorbing, each player has two actions.
In each entry of the matrix in the figure,
the stage payoff appears in the middle and the transition appears on the top-right corner:
$s_0$ means that with probability 1 the play stays in state $s_0$,
while $\ast$ means that with probability 1 the play continues to an absorbing state, where the payoff vector is the vector written in the entry.

\centerline{
\begin{picture}(190,100)(-70,-23)
\put(-10,8){$B$}
\put(-10,28){$T$}
\put(30,50){$L$}
\put(90,50){$R$}
\put(0, 0){\numbercellonga{$0,1$}{$\ast$}}
\put(0,20){\numbercellonga{$1,0$}{$s_0$}}
\put(60,0){\numbercellonga{$2,0$}{$\ast$}}
\put(60,20){\numbercellonga{$0,1$}{$s_0$}}
\put(45,-13){\hbox{\rm{state }} $s_0$}
\put(-70,18){\hbox{\rm{Player 1}}}
\put(43,65){\hbox{\rm{Player 2}}}
\end{picture}
}
\centerline{Figure \arabic{figurecounter}: The absorbing game in Example \ref{example:10}.}
\addtocounter{figurecounter}{1}

\bigskip

The uniform min-max value of Player~1 is $\tfrac{2}{3}$ and the uniform min-max value of Player~2 is $\tfrac{1}{2}$.
In the unique equilibrium of this game Player~1 plays $x_1(\lambda) = [\tfrac{1}{1+\lambda}(T),\tfrac{\lambda}{1+\lambda}(B)]$
and Player~2 plays $x_2(\lambda) = [\tfrac{2}{3}(L),\tfrac{1}{3}(R)]$.
The limit of the equilibrium strategy profiles is for Player~1 to play $T$ and for Player~2 to play $[\tfrac{2}{3}(L),\tfrac{1}{3}(R)]$,
which yields Player~2 a payoff of $\tfrac{1}{3}$, which is lower than his uniform min-max value.
Similarly, the equilibrium strategy pair for a given discount factor $x(\lambda) := (x_1(\lambda),x_2(\lambda))$
may yield low payoff for discount factors different than $\lambda$, because $\lim_{\lambda' \to 1} \gamma^{\lambda'}_2(x(\lambda)) = \tfrac{1}{3}$.
\end{example}

\subsection{Subgame Perfectness and $\ep$-Individual Rationality}

For every correlated strategy $\tau$ and every finite history $h^n = (s^1,a^1,\cdots,s^n) \in H$
define the strategy profile conditioned on $h^n$,
denoted by $\tau_{h^n}$, by
\[ \tau_{h^n}(\widehat h) = \sigma^\ep(s^1,a^1,\cdots,s^{n-1},a^{n-1},\widehat s^1,\widehat a^1,\widehat s^2,\widehat a^2,\cdots,\widehat s^m), \ \ \
\forall \widehat h^m = (\widehat s^1,\widehat a^1,\cdots,\widehat s^m) \in H. \]

The notion of acceptability that we defined is not subgame perfect.
That is, even if $\sigma$ is a min-max $\ep$-acceptable strategy profile, there may be a finite history $h^n \in H$ such that
$\limsup_{\lambda \to 1}\gamma_i^{\lambda}(s^n,\sigma_{h^n}) < v^1_i(s^n)-\ep$
for some player $i \in I$.
We here present two stronger versions of acceptability that take care of subgame perfectness.

\begin{definition}
\label{definition:subgame}
Let $\ep \geq 0$.
A strategy profile $\sigma$ is \emph{subgame-perfect min-max $\ep$-acceptable} if for every player $i \in I$, every finite history $h^n \in H$,
and every discount factor $\lambda$ sufficiently close to 1,
we have $\gamma_i^{\lambda}(s^1,\sigma \mid h^n) \geq v^1_i(s^1)-\ep$.
\end{definition}

An even stronger concept of acceptability can be defined using the notion of $\ep$-individually rational strategy profiles,
which originates from the study of Solan (2001).
For every state $s \in S$ and every correlated action $\alpha(s) \in \times_{i \in I} \Delta(A_i)$ define
\[ u^*_i(s,\alpha(s)) := \sum_{s' \in S} q(s' \mid s,\alpha(s)) v^1_i(s'). \]
This is the expected continuation uniform min-max value of player~$i$ at state $s$ when the players play the mixed action profile $\alpha(s)$.

\begin{definition}
Let $\ep \geq 0$.
A correlated strategy $\tau$ is \emph{$\ep$-individually rational}
if for every finite history $h^n \in H$,
every player $i \in I$,
and every action $a_i \in A_i$ we have
\[ u^*_i(s^n,a_i,\tau_{-i}(h^n)) \leq \lim_{\lambda \to 1} \gamma_i^{\lambda}(s^n,\tau_{h^n}) + \ep. \]
\end{definition}
In words, a correlated strategy is $\ep$-individually rational if when the players are sufficiently patient,
no player can profit more than $\ep$ by deviating after any finite history, provided the deviation triggers a punishment at the uniform min-max level.

Every $\ep$-individually rational strategy profile that is min-max $\ep$-acceptable
is also subgame-perfect min-max $\ep$-acceptable.
We now state stronger versions of Theorems \ref{theorem:1} and \ref{theorem:2}.

\begin{theorem}
\label{theorem:1a}
For every stochastic game and every $\ep > 0$ there exists
a min-max $\ep$-acceptable $\ep$-individually rational strategy profile such that each of the strategies composing the profile
can be implemented by an automaton with size $|S| \times |I|$.
\end{theorem}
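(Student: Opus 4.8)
The plan is to split the argument into two stages: first produce a single \emph{stationary correlated} strategy with the desired payoff and incentive properties, and then \emph{decorrelate} it into a genuine strategy profile, paying for the decorrelation with exactly the factor $|I|$ in the automaton size. For the first stage I would construct a stationary correlated strategy $\mu \in \Sigmastatcorr$ such that, at every state $s$, (a) the limit payoff $g_i(s) := \lim_{\lambda \to 1} \gamma^\lambda_i(s;\mu)$ satisfies $g_i(s) \geq v^1_i(s) - \tfrac{\ep}{2}$ for every player $i$, and (b) $u^*_i(s,a_i,\mu_{-i}(s)) \leq g_i(s) + \tfrac{\ep}{2}$ for every $i$ and every $a_i \in A_i$. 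Property (a) is min-max $\tfrac{\ep}{2}$-acceptability, which is Theorem~\ref{theorem:2} applied with $\tfrac{\ep}{2}$ (whose construction I would reuse), and (b) is stationary $\tfrac{\ep}{2}$-individual rationality. I would obtain such a $\mu$ by a fixed-point argument on $\Sigmastatcorr$: using the uniform min-max values $v^1_i(\cdot)$ as threat levels, one seeks a stationary correlated strategy that, state by state, satisfies the local incentive constraints (b) while keeping the induced average flow above the threats; compactness and convexity of $\Sigmastatcorr$ together with a Kakutani-type argument give existence, and since $\mu$ is stationary the induced process on $S$ is a finite Markov chain, so the limit $g_i(s)$ exists and equals the long-run average, making (a) a genuine statement about that chain.

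For the second stage I would decorrelate $\mu$ into a behavior strategy profile $\sigma = (\sigma_i)_{i\in I}$ in which each $\sigma_i$ is implemented by an automaton whose automaton-states are the pairs $(s,j)$ with $s \in S$ and $j \in I$, hence of size $|S|\times|I|$. The index $j$ cycles through the players and designates, within a block of $|I|$ consecutive stages, whose private randomization currently feeds a jointly controlled lottery: the players use the publicly announced actions to generate a common random seed that no single player can bias, and this seed selects an action profile distributed as $\mu(s)$. Because the block length $|I|$ is fixed while $\lambda \to 1$, consecutive stages carry asymptotically equal discount weight, so the limit discounted payoff of $\sigma$ equals the long-run average payoff of the decorrelated process, which reproduces the transition kernel and stage payoffs of $\mu$ on $S$; hence $g_i(s)$ is unchanged in the limit, and (a) upgrades $\sigma$ to min-max $\ep$-acceptability once the timing slack is absorbed into the remaining $\tfrac{\ep}{2}$. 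The unbiasedness of the lottery guarantees that a unilateral deviator cannot move the distribution of the recommended profile faced by the coalition $-i$, so the continuation min-max value of a one-shot deviation, $u^*_i(s,a_i,\sigma_{-i})$, agrees with its value under $\mu$; combined with (b) this yields $\ep$-individual rationality of $\sigma$.

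The first genuine obstacle is the existence of $\mu$ in stage one: as Example~\ref{example:10} shows, the relevant quantity is the $\lambda \to 1$ limit, along which the weights on states shift, so one cannot simply take a discounted equilibrium; the fixed point must be taken directly for the limiting average criterion with the min-max values as threats, and verifying that the constraints (b) are compatible with keeping the average flow above $v^1_i(s)-\tfrac{\ep}{2}$ at \emph{every} initial state simultaneously is the technical heart of this stage. The second, and I expect the sharper, obstacle is in stage two: I must show that the jointly controlled lottery runs inside an automaton of size \emph{exactly} $|S|\times|I|$ and that it preserves $\ep$-individual rationality, i.e.\ that a player cannot profitably combine a deviation in the underlying game with a manipulation of the randomization device. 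The degenerate case in which some $|A_i|=1$, so that the player cannot randomize and cannot seed the lottery, must be handled separately by playing that player's marginal of $\mu$ deterministically and noting that such a player has no lottery move to manipulate; checking that all of these estimates are uniform in $\lambda$ for $\lambda$ close to $1$ completes the argument.
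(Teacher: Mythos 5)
Your proposal diverges from the paper's route, and both stages contain gaps that I do not see how to close. The sharper failure is in stage two. First, your seed-generating stages are real stages of the stochastic game: the actions announced there are fed into $q$, so the state moves during the block, and the decorrelated process does \emph{not} reproduce the transition kernel and stage payoffs of $\mu$ on $S$ as you claim; with a fixed block length this error does not vanish as $\lambda \to 1$. Second, an automaton whose states are the pairs $(s,j)$ cannot remember the realized seed, so its output function cannot depend on the lottery's outcome; storing even a coarse seed multiplies the state space by the number of seed values, destroying the $|S|\times|I|$ bound. In the paper the factor $|I|$ has an entirely different origin: Carath\'eodory's theorem bounds by $|I|$ the number of extreme state-action frequency vectors (Proposition \ref{proposition:4}) or exits (Proposition \ref{proposition:3}) that the automaton must cycle through, each consuming at most $|C|$ automaton-states. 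Third, and most fundamentally, $\ep$-individual rationality is required \emph{after every finite history}. Once the seed is public, the history reveals the recommendation, so the conditional play of $-i$ at the action stage is (nearly) pure, and the constraint becomes the pointwise requirement $u^*_i(s,a_i,\hbox{pure recommendation}) \leq \lim_\lambda \gamma^\lambda_i + \ep$, which the averaged constraint (b) for $\mu$ does not imply. The paper sidesteps exactly this by building its profile from ($\ep$-perturbations of) the sets $E(s)$ of equilibria of the auxiliary one-shot games $G(s)$ with payoffs $u^*$, so that no unilateral action raises the expected continuation min-max value pointwise, by the very definition of equilibrium in $G(s)$; this is the content of the modification in Section \ref{subsection:implication}.

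Stage one is also unsupported as stated. The limiting-average payoff is not continuous in the stationary (correlated) strategy --- the recurrence structure of the induced Markov chain changes discontinuously --- so a Kakutani-type fixed point for the limiting criterion with min-max threats is not available off the shelf; indeed, the existence of the stationary correlated $\mu$ you posit is essentially Theorem \ref{theorem:2a} itself, which the paper does not obtain by a fixed point but by a long detour: take the history-dependent Solan--Vieille profile $\widehat\sigma^\ep$ satisfying (SV.1)--(SV.4), decompose $S$ into transient states and maximal communicating sets under $E$, classify the latter into type A (where the limit state-action frequency of $\widehat\sigma^\ep$, pushed into $\conv(\Pipure)$ via Proposition \ref{prop:mt}, yields a high-payoff simple profile) and type B (where the limit exit distribution of $\widehat\sigma^\ep$ is replicated by unilateral perturbations at exit states, using Lemma \ref{lemma:14} and a Brouwer argument to tune the exit rates), and glue everything with the submartingale property of $v^1$ from (SV.3). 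That the paper lists the existence of \emph{stationary} min-max $\ep$-acceptable profiles as an open problem should be taken as a warning that fixed-point arguments at the limiting-average criterion are not routine here. In short: your plan assumes in stage one a result as strong as the paper's Theorem \ref{theorem:2a}, and the reduction in stage two from the correlated object to a profile of size $|S|\times|I|$ fails on the kernel, the size bound, and the history-wise individual rationality.
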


Our second main result states that there exists a stationary min-max $\ep$-acceptable $\ep$-individually rational correlated strategy.

\begin{theorem}
\label{theorem:2a}
For every stochastic game and every $\ep > 0$ there exists a stationary min-max $\ep$-acceptable $\ep$-individually rational correlated strategy.
\end{theorem}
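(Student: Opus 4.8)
The plan is to exploit stationarity to turn the two requirements into conditions on a single element $\tau=(\tau(s))_{s\in S}\in\Sigmastatcorr$ and its long-run payoff, and then to produce $\tau$ by a fixed-point argument in the discounted games followed by a limit as $\lambda\to1$. First I would record the simplifications that stationarity affords. Since $\tau$ is stationary, the conditioned strategy $\tau_{h^n}$ depends on $h^n$ only through its last state $s^n$ and coincides with $\tau$ started afresh at $s^n$; moreover the limit $g_i(s):=\lim_{\lambda\to1}\gamma^\lambda_i(s,\tau)$ exists and equals the long-run average payoff of player~$i$ in the Markov chain on $S$ with transitions $q(\cdot\mid s,\tau(s))$. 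Consequently both requirements become conditions indexed by states alone: \emph{(A) acceptability,} $g_i(s)\ge v^1_i(s)-\ep$ for all $i,s$; and \emph{(B) individual rationality,} $\sum_{s'\in S}q(s'\mid s,(a_i,\tau_{-i}(s)))\,v^1_i(s')\le g_i(s)+\ep$ for all $i,s$ and all $a_i\in A_i$. It therefore suffices to construct one stationary correlated strategy satisfying (A) and (B).

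The construction I propose works in the $\lambda$-discounted game for $\lambda$ close to $1$ and takes a limit. Fix such a $\lambda$ and consider the compact convex set of pairs $(\tau,w)\in\Sigmastatcorr\times[-1,1]^{S\times I}$. Define a correspondence whose fixed points are the stationary correlated strategies together with their discounted values: $w$ must satisfy the stationary value equation $w_i(s)=(1-\lambda)u_i(s,\tau(s))+\lambda\sum_{s'}q(s'\mid s,\tau(s))w_i(s')$, and at each state $s$ the correlated action $\tau(s)$ must be selected from the convex feasible set
\[ C_s(w):=\Bigl\{\alpha\in\Delta(A):\ \textstyle\sum_{s'}q(s'\mid s,(a_i,\alpha_{-i}))\,v^1_i(s')\le w_i(s)+\tfrac{\ep}{2}\ \ \forall i\in I,\ \forall a_i\in A_i\Bigr\} \]
of correlated actions that cap every one-shot deviation by its continuation min-max value, choosing among $C_s(w)$ an $\alpha$ that maximizes a fixed strictly increasing aggregate of the players' continuation values so as to drive each $w_i(s)$ up toward $v^1_i(s)$. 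The feasible sets $C_s(w)$ are convex and move upper-hemicontinuously with $w$, so Kakutani's theorem yields a fixed point $(\tau^\lambda,w^\lambda)$ with $w^\lambda_i(s)=\gamma^\lambda_i(s,\tau^\lambda)$. Letting $\lambda\to1$ along a subsequence, the strategies $\tau^\lambda$ converge to some $\tau\in\Sigmastatcorr$; the closedness of the constraints defining $C_s$ passes (B) to the limit with $\tfrac{\ep}{2}\le\ep$, and the maximizing selection, together with $g_i(s)=\lim_\lambda w^\lambda_i(s)$, yields (A).

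The heart of the matter---and the step I expect to be the main obstacle---is that the feasible set $C_s(w)$ must be nonempty \emph{and} compatible with acceptability at all states simultaneously. Nonemptiness for a single player rests on the characterization $v^1_i(s)=\lim_{\lambda\to1}v^\lambda_i(s)$: passing to the limit in the recursion for $v^\lambda_i$ shows that the one-shot minimax value $\min_{\alpha_{-i}}\max_{a_i}\sum_{s'}q(s'\mid s,(a_i,\alpha_{-i}))v^1_i(s')$ is at most $v^1_i(s)$, so the opponents \emph{can} hold player~$i$'s deviation-continuation down to his min-max level. The difficulty is that the minimizing marginals differ across players, so one must reconcile all of them inside a single correlated distribution $\tau(s)$ while keeping each player's long-run payoff at least his min-max value. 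This is exactly where a purely one-shot or state-by-state argument fails---and why a naive limit of discounted equilibria is not acceptable (Example~\ref{example:10}): the slack needed for simultaneous reconciliation is supplied by the long-run averaging over the recurrent classes of the chain induced by $\tau$, in which $g_i(s)$ may strictly exceed $v^1_i(s)$ at some states and thereby relax the constraints (B) elsewhere. Controlling this average-reward limit---bias terms and the possibility of several recurrent classes---is the technical core of the proof.
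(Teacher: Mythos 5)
Your argument has a genuine gap, and it sits exactly at the step you defer to the end. The passage $\lambda\to 1$ does not work: the fixed points $(\tau^\lambda,w^\lambda)$ are essentially constrained stationary discounted equilibria, and the map $\tau\mapsto\lim_{\lambda\to1}\gamma^\lambda_i(s,\tau)$ on $\Sigmastatcorr$ is not continuous, because the recurrent structure of the Markov chain induced by $\tau$ changes discontinuously with $\tau$. Hence for the limit strategy $\tau$ the quantity $g_i(s)$ need bear no relation to $\lim_{\lambda\to1}w^\lambda_i(s)$, and neither (A) nor (B) passes to the limit: the constraints in $C_s(w^\lambda)$ cap deviations by $w^\lambda_i(s)+\tfrac{\ep}{2}$, whereas (B) requires a cap relative to $g_i(s)$, which can be strictly smaller. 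This is precisely the phenomenon of Example \ref{example:10}: there $\gamma^\lambda_2(s_0,x(\lambda))\geq\tfrac{1}{2}$ for every $\lambda$, yet $\lim_{\lambda'\to1}\gamma^{\lambda'}_2(s_0,x(\lambda))=\tfrac{1}{3}<v^1_2(s_0)$, and the limit of the $x(\lambda)$ fails acceptability. Your closing paragraph names this obstacle (``bias terms and the possibility of several recurrent classes'') but supplies no mechanism to overcome it, so the proof is not complete where it matters most. A second, independent gap is the nonemptiness and compatibility of $C_s(w)$ at a fixed point: the limit recursion for $v^\lambda_i$ gives, for each player $i$ \emph{separately}, a profile $x_{-i}$ of the others holding his deviation continuation at $v^1_i(s)$, but your correspondence needs a single $\alpha\in\Delta(A)$ meeting all players' constraints with the endogenous caps $w_i(s)+\tfrac{\ep}{2}$, and nothing forces $w_i(s)\geq v^1_i(s)-\tfrac{\ep}{2}$ at a fixed point; maximizing one fixed scalar aggregate of continuation values cannot deliver per-player lower bounds, since it can trade one player's continuation value against another's.

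For contrast, the paper avoids taking any limit of stationary strategies; it fixes the recurrent structure in advance. It starts from the history-dependent Solan--Vieille profiles $\widehat\sigma^\ep$ with properties (SV.1)--(SV.4) (Proposition \ref{theorem:sv}), decomposes the state space into transient states and maximal communicating sets under $E$ of types A and B, and then builds a stationary correlated strategy directly: inside a type A set it uses Theorem \ref{theorem:mt} (state-action frequencies of correlated strategies equal the convex hull of those of stationary correlated strategies, in place of Proposition \ref{prop:mt}) to realize the SV frequency vector, after trimming exit actions, by a stationary correlated strategy whose long-run payoff is at least $v(C)-2\ep$; inside a type B set it mixes a stationary correlated strategy that visits all states of $C$ with the exit action profiles $a^{(l)}$, with weights tuned so that the exit distribution matches the SV limit exit distribution, so that by (SV.3) the expected min-max value upon exit is nearly preserved. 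Acceptability then follows because $(v^1_i(s^{k_n}))_n$ is a submartingale along entry times, and $\ep$-individual rationality follows from the modifications of Section \ref{subsection:implication} (requiring the connecting profiles and the action $a'$ of Lemma \ref{lemma:14} to be perturbations of, or members of, $E$). If you want to rescue a fixed-point route, you would need to make the recurrent-class structure part of the fixed-point data rather than an afterthought of the limit --- which is, in effect, what the paper's communicating-set decomposition accomplishes combinatorially.
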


\subsection{Finite Horizon Acceptability and Limit of the Averages Acceptability}
\label{comment:uniform}

We defined the concept of acceptability using the discounted evaluation.
One could alternatively define this concept using finite horizon games or the infinite game.
That is, for every state $s^1 \in S$, every player $i \in I$, and every $k \in \dN$ the \emph{$k$-stage payoff} is given by
\[ \gamma^k_i(s^1;\sigma) := \E_{s^1,\sigma}\left[ \frac{1}{k} \sum_{n=1}^k u_i(s^n,a^n)\right], \ \ \ \forall \sigma \in \Sigma. \]
Let $w \in \dR^{S \times I}$, and
call a strategy profile $\sigma$ \emph{average $w$-acceptable} if for every $k$ sufficiently large
\begin{equation}
\label{equ:accept:n}
\gamma^k_i(s,\sigma) \geq w_i(s^1), \ \ \ \forall i \in I, \forall s^1 \in S.
\end{equation}
Call the strategy profile $\sigma$ \emph{limit $w$-acceptable} if
\begin{equation}
\label{equ:accept:limit}
\E_{s,\sigma}\left[\lim_{k \to \infty} \frac{1}{k} \sum_{n=1}^k u_i(s^n,a^n)\right] \geq w_i(s^1), \ \ \ \forall i \in I, \forall s^1 \in S.
\end{equation}
One could define a stronger concept of acceptability that is inspired by the notion of uniform equilibrium:
the strategy profile $\sigma$ is \emph{uniform $w$-acceptable} if it is both discounted $w$-acceptable, average $w$-acceptable, and limit $w$-acceptable.
The implications of Blackwell (1962), Solan (1999), and Solan and Vieille (2002) for acceptable strategy profiles are valid
with the stronger notion of uniform acceptability.
Moreover, every strategy profile that can be implemented by an automaton and is $w$-acceptable according to the discounted, average, or limit notion,
is uniform $w$-acceptable.

\subsection{Open Problems}

The introduction of the concept of acceptable strategy profiles raises several open questions.
These questions include the following:
\begin{itemize}
\item
Whether there exists a \emph{stationary} min-max $\ep$-acceptable strategy profiles for every $\ep > 0$.
If the answer to the above question is negative, then it will be interesting to know
the size of the smallest automaton that is needed to implement a min-max $\ep$-acceptable strategy profile.
\item
The characterization of the set of payoff vectors $w$ for which there exists stationary $w$-acceptable strategy profiles.
\item
More generally, one can study the set of payoff vectors $w$ for which there exists $w$-acceptable strategy profiles
in some prespecified set of simple strategy profiles, like the set of strategy profiles
that can be implemented by automata with at most $K$ states.
\item
We allow the automata that are used in the construction of acceptable strategy profiles to have random transitions and to choose mixed actions.
It will be interesting to know the size of the minimal automaton needed to implement acceptable strategy profiles
when one require the transitions of the automata, the function that selects the automata's actions, or both, to be deterministic.
\end{itemize}

\section{Proof of the Main Results}
\label{section:proof}

We will start by proving Theorem \ref{theorem:1}.
To this end we need to construct a strategy profile that can be implemented by a small automaton and yields the players a high payoff.
As mentioned earlier, Solan and Vieille (2002) constructed a history-dependent strategy profile that yield the players a high payoff.
Our proof technique is to transform the strategy profile of Solan and Vieille (2002)
into a simple strategy profile, without lowering the players payoffs.
To this end we will define a concept of communicating sets of states, and we will identify communicating sets of two types, A and B.
In communicating sets of type A, there is a strategy profile that yields to all players a high payoff.
In communicating sets of type B, there is a strategy profile that ensures that the play leaves the set and the expected continuation
uniform min-max value is high.
We will then show that the strategy profiles mentioned above for both types of communicating sets can be chosen to be simple,
that is, they can be implemented by small automata.
We will finally show that under the strategy profile of Solan and Vieille (2002)
all communicating sets are of either type A or B,
and with probability 1 the play reaches a communicating set of type A, where the payoff is high.

In fact, the strategy profile that we will construct is subgame perfect in the sense of Definition \ref{definition:subgame},
hence we will also prove Theorem \ref{theorem:1a}.
We will then explain how to modify the proof to obtain Theorems \ref{theorem:2} and \ref{theorem:2a}.

\subsection{Irreducible Sets}
\label{section:stable}

Let $x$ be a stationary strategy profile.
A nonempty set $D\subseteq S$ is \emph{closed} under $x$ if under $x$ the play never leaves $D$ once it enters this set:
$q(D \mid s,x)=1$ for every state $s\in D$.
A closed set is \emph{irreducible} if it does not contain any other closed set.
Denote by $\calI(x)$ the collection of all irreducible sets w.r.t.~$x$.

For every irreducible set $D \in \calI(x)$,
the limit payoff under a stationary strategy profile, $\lim_{\lambda \to 1} \gamma^\lambda(s^1,x)$,
is independent of the initial state, as long as the initial state is in $D$.
Eq.~(\ref{equ:8.2}) below provides a formula for the payoff using the state-action frequency vector induced by $x$.

\subsection{Auxiliary Normal-Form Games}
\label{section:auxiliary}

Whenever $x,y \in \dR^d$ we denote $x \geq y$ if $x_i \geq y_i$ for every $i=1,2,\ldots,d$,
and $x=y$ if $x_i = y_i$ for every $i=1,2,\ldots,d$.

For every state $s\in S$ let $G(s)$ be the normal-form game with
(i) player set $I$,
(ii) the action set of each player $i$ is $A_{i}$, and
(iii) the payoff function is
\[
U_i(s;a) :=
\sum_{s^{\prime}\in S}q(s^{\prime}\mid s,a)v^{1}_{i}(s^{\prime}), \ \ \ \forall i \in I, \forall a \in A.
\]
This is the one-shot game played at state $s$ in which the payoff of each player is given by his expected uniform min-max value
at tomorrow's state.

For every state $s \in S$ denote by $E(s)$ the set of equilibria of the game $G(s)$,
and let $E = \times_{s \in S} E(s) \subseteq \left(\times_{i \in I} \Delta(A_i)\right)^S$
be the set of stationary strategy profiles composed of equilibria of the games $(G(s))_{s \in S}$.
Note that for every mixed action profile $x(s) \in E(s)$ of $G(s)$, the payoff to each player $i \in I$ is at least $v^1_i(s)$:
\begin{equation}
\label{equ:70}
v^1(s) \leq U(s;x(s)) = \sum_{s' \in S} q(s' \mid s,x(s)) v^1(s'),
\end{equation}
where $U_i(s;x(s))$ is the multilinear extension of $U_i(s;\cdot)$ to $\Delta(A)$, for each player $i \in I$.

A strategy profile $\sigma$ is an \emph{$\ep$-perturbation} of $E$
if after every history the mixed action profile that is played is $\ep$-close to some mixed action profile in $E(s)$,
where $s$ is the current state.
Formally,
\begin{definition}
Let $\ep > 0$.
A strategy profile $\sigma$ is an \emph{$\ep$-perturbation} of $E$ if
for every finite history $h^n = (s^1,a^1,\cdots,s^n) \in H$ we have $d_\infty(\sigma(h^n),E(s^n)) < \ep$,
where $d_\infty(x,Y) := \max_{y \in Y} \|x-y\|_\infty$ is the distance between the point $x$ and the set $Y$.
\end{definition}

\subsection{Communicating Sets under~$E$}
\label{section:communicating}

For every set $C \subseteq S$
denote by $\nu_C$ the first arrival time to $C$:
\[ \nu_C := \min\{ n \in \dN \colon s^n \in C\}. \]
By convention, the minimum of an empty set is $+\infty$.
For every set of states $C \subseteq S$, the complement of $C$ is denoted $S \setminus C$ or $C^c$.

\begin{definition}
Let $C \subseteq S$ be a set of states and let $s,s' \in C$ be two states in $C$.
We say that \emph{state $s$ leads in $C$ to state $s'$} if there is a strategy profile $\sigma$ such that
when the initial state is $s$ and the players follow $\sigma$, the play reaches $s'$ before exiting $C$:
\[ \prob_{s,\sigma}(\nu_{\{s'\}} < \nu_{C^c}) = 1. \]
\end{definition}
Note that if state $s$ leads in $C$ to state $s'$, then there is a pure stationary strategy that ensures that the play reaches $s'$ without leaving $C$
(see also Lemma 3.6 in Solan and Vieille, 2002).
It follows that for every set of states $D\subset C$ there is a pure stationary strategy that ensures that the play reaches $D$ without leaving $C$,
provided the initial state is in $C \setminus D$.
We denote such a pure stationary strategy profile by $y_{D;C}$.

In Section \ref{section:stable} we defined the concept of closedness under a stationary strategy profile.
We here extend this concept to closedness under a collection of strategy profiles.

\begin{definition}
A set of states $D \subseteq S$ is \emph{closed under $E$} if for every state $s \in D$ and every $x(s) \in E(s)$ we have
\[ q(D \mid s,x(s)) = 1. \]
\end{definition}
In other words, a set of states $D$ is closed under $E$ if under
strategy profiles that use mixed actions in $E$
the play cannot leave $D$.

A set in a stochastic game is \emph{communicating} if every state leads in the set to any other state (see
Ross and Varadarajan (1991) for the analog definition in Markov decision problems or Solan and Vieille (2002)).
We will need a variation of this definition, which we present now.
\begin{definition}
\label{def:communicating}
A set of states $C \subseteq S$ is \emph{communicating under $E$} if the following conditions hold:
\begin{itemize}
\item[(C.1)]   The set $C$ is closed under $E$.
\item[(C.2)]   For every two states $s,s' \in C$, state $s$ leads in $C$ to state $s'$.
\item[(C.3)]   $v^1(s) = v^1(s')$, for every two states $s,s' \in C$.
\end{itemize}
\end{definition}

When $C$ is a communicating set under $E$ we denote by $v(C)$ the common uniform min-max value of the states in $C$;
that is, $v(C) := v^1(s)$ for any $s \in C$.

The following lemma asserts that communicating sets exist.
Moreover, it provides a way to identify minimal communicating sets under~$E$.
\begin{lemma}
\label{lemma:exist}
There exists a communicating set under $E$.
\end{lemma}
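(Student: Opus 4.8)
The plan is to exhibit a concrete communicating set under $E$, namely a minimal nonempty set that is closed under $E$. First I would note that by Nash's theorem each finite game $G(s)$ has an equilibrium, so $E(s)\neq\emptyset$ for every $s$, and the whole state space $S$ is trivially closed under $E$. Hence the family of nonempty closed-under-$E$ sets is a nonempty finite family, and I let $C$ be a minimal element of it with respect to inclusion. Condition (C.1) then holds by construction, and the engine of the whole argument is the resulting minimality principle: \emph{$C$ has no proper nonempty subset that is closed under $E$}. A small observation I would record at the outset is that whenever $x(s)\in E(s)$, every pure action profile $a$ in the support of $x(s)$ is ``safe'' in the sense that $q(C\mid s,a)=1$; this is because $q(C\mid s,x(s))=1$ is a convex combination of the numbers $q(C\mid s,a)\le 1$ over the support, so each of them must equal $1$.

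I would prove (C.3) first, as it is the cleaner of the two nontrivial conditions. Fix a player $i$, put $M:=\max_{s\in C}v^1_i(s)$, and let $T:=\{s\in C: v^1_i(s)=M\}$, which is nonempty. For $s\in T$ and any $x(s)\in E(s)$, inequality (\ref{equ:70}) gives $M=v^1_i(s)\le \sum_{s'}q(s'\mid s,x(s))v^1_i(s')$; since the right-hand side is a convex combination, supported on $C$, of values $v^1_i(s')\le M$, it is at most $M$, so equality holds and $v^1_i(s')=M$ for every $s'$ with $q(s'\mid s,x(s))>0$. Thus $q(T\mid s,x(s))=1$, i.e.\ $T$ is closed under $E$, and minimality forces $T=C$. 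Hence $v^1_i$ is constant on $C$; repeating over all players yields (C.3).

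For (C.2) the plan is to work with the directed \emph{safe-reachability graph} $G$ on vertex set $C$, with an edge $s\to u$ whenever some safe action $a$ (i.e.\ $q(C\mid s,a)=1$) satisfies $q(u\mid s,a)>0$; by the opening observation such edges exist and always stay inside $C$. I would pass to the condensation of $G$ and choose a \emph{sink} strongly connected component $B$. The key point is that $B$ is closed under $E$: being a sink means that for $s\in B$ no safe action sends positive mass outside $B$, so $q(B\mid s,a)=1$ for every safe $a$, and in particular $q(B\mid s,x(s))=1$ for every $x(s)\in E(s)$ since the support of $x(s)$ consists of safe profiles. By minimality $B=C$, so $C$ is a single strongly connected component of $G$. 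It then remains to upgrade positive-probability reachability to the almost-sure ``leads in $C$'' property: from strong connectivity, every state can reach any target $s'$ along a safe path of length at most $|C|$ with probability bounded below by some $\delta>0$, and the strategy that keeps steering along such paths reaches $s'$ within $k|C|$ stages with probability at least $1-(1-\delta)^k\to 1$, never leaving $C$ because only safe actions are used. This shows $\prob_{s,\sigma}(\nu_{\{s'\}}<\nu_{C^c})=1$ for a suitable $\sigma$, giving (C.2).

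I expect the main obstacle to be exactly this last upgrade in (C.2): strong connectivity of $G$ only provides a \emph{positive}-probability path to each target, whereas the definition of ``leads in $C$'' demands reaching it with probability one before exiting $C$. The device above (a uniform lower bound on the per-block hitting probability plus a geometric tail, using only safe actions so that the play provably never leaves $C$) is the standard finite-state fix and is the analogue of Lemma 3.6 of Solan and Vieille (2002) invoked after the definition; verifying it carefully, together with the claim that a sink component is automatically closed under $E$, is where the real content of the proof lies. By contrast, (C.1) is immediate and (C.3) follows from the maximizer argument above, both trading on the same minimality principle.
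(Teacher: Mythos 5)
Your proposal is correct, and it builds the same object as the paper --- a minimal nonempty set closed under $E$ (the paper obtains it as a minimal closed set of an auxiliary Markov chain on $S$ whose transitions have positive probability exactly where some $x(s)\in E(s)$ does, which is the same thing) --- but the verifications of (C.2) and (C.3) run along genuinely different lines. For (C.2) the paper simply invokes the standard fact that a minimal closed set of a finite Markov chain is a recurrence class, so every state leads in it to every other state (with the a.s.\ upgrade delegated to the remark following the definition, cf.\ Lemma 3.6 of Solan and Vieille (2002)); you instead re-derive this from the minimality principle via the safe-action graph: a sink strongly connected component is closed under $E$ (using your correct observation that every action profile in the support of an equilibrium is safe), hence equals $C$ by minimality, and you then make the upgrade from positive-probability to almost-sure reachability explicit with the shortest-path steering strategy and the geometric tail bound $\prob(\nu_{\{s'\}}>k|C|)\leq(1-\delta)^k$, the play never leaving $C$ since only safe actions are used. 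For (C.3) the paper applies Eq.~(\ref{equ:70}) along the play leading from $s$ to $s'$ (a submartingale/optional-stopping comparison) to get $v^1_i(s)\leq v^1_i(s')$ and concludes by symmetry, whereas you use a maximum principle: the argmax set $T$ of $v^1_i$ on $C$ is itself closed under $E$ by the equality case in the convex-combination bound, so $T=C$ by minimality. Your route is more self-contained and elementary (no stopping times, no appeal to Markov-chain recurrence theory, and it exposes the step the paper leaves implicit), at the cost of being longer; the paper's route is shorter by leaning on standard facts. Both are sound.
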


\begin{proof}
Consider a Markov chain whose set of states is $S$, and whose transition $p$ satisfies the following property:
there is a positive probability to move from state $s$ to state $s'$ if and only if there is
a mixed action profile $x(s) \in E(s)$ such that $q(s' \mid s,x(s)) > 0$.
A set of states $D$ is \emph{closed} under $p$ if $p(D \mid s) = 1$ for every $s \in D$.
Let $D$ be a minimal closed set under $p$.
By definition, $D$ is closed under $E$, so that Condition (C.1) holds.
Moreover, for every $s,s' \in D$, state $s$ leads in $D$ to state $s'$, so that Condition (C.2) holds.

Since any state $s \in D$ leads in $D$ to any other state $s' \in D$ using mixed action profiles in $E$,
it follows by Eq.~(\ref{equ:70}) that $v^1_i(s) \leq v^1_i(s')$ for every player $i \in I$
and every two states $s,s' \in D$, and therefore Condition (C.3) holds as well.
\end{proof}

\bigskip

Denote by $\calCmax$ the collection of all maximal communicating sets under~$E$.
Since the strategy profiles that lead in $C$ from one state to any other state do not necessarily use action profiles in $E$,
There may be communicating sets under $E$ that strictly contain other communicating sets under $E$.
Note that any two maximal communicating sets under $E$ are either disjoint or equal.
Denote by $C^* := \cup_{C \in \calCmax} C$ the union of all maximal communicating sets under~$E$.

The following standard result states that there is a stationary strategy profile that uses only mixed action profiles in $E$,
which ensures that the play reaches a maximal communicating set under $E$.

\begin{lemma}
\label{lemma:transient}
There is a stationary strategy profile $x$ that satisfies the following properties:
\begin{itemize}
\item   $x(s) \in E(s)$ for every $s \in S$.
\item   For every state $s \not\in C^*$ we have
$\prob_{s,x} (\nu_{C^*} < \infty)= 1$.
\end{itemize}
\end{lemma}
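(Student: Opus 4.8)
The plan is to exploit the Markov chain $p$ introduced in the proof of Lemma~\ref{lemma:exist}, together with the key structural fact recorded there: every minimal closed set under $p$ is a communicating set under $E$ and is therefore contained in some member of $\calCmax$, hence in $C^*$. The goal is to select, at each state $s \notin C^*$, an equilibrium $x(s) \in E(s)$ whose support pushes the play strictly ``closer'' to $C^*$, while choosing arbitrary equilibria at states of $C^*$. Concretely, I would first define a nested family of sets by $R_0 := C^*$ and, for $k \geq 0$,
\[ R_{k+1} := R_k \cup \{ s \in S \colon \exists\, x(s) \in E(s) \text{ with } q(R_k \mid s,x(s)) > 0 \}, \]
so that $R_k$ is the set of states from which $C^*$ can be reached within $k$ stages using equilibria of the auxiliary games.

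The central claim is that $\bigcup_{k} R_k = S$, and I would prove it by contradiction. Set $T := S \setminus \bigcup_k R_k$ and suppose $T \neq \emptyset$. By construction, for every $s \in T$ and every $x(s) \in E(s)$ we have $q(\bigcup_k R_k \mid s,x(s)) = 0$, hence $q(T \mid s,x(s)) = 1$; that is, $T$ is closed under $E$, and consequently $p(T \mid s) = 1$ for every $s \in T$, so $T$ is closed under $p$. A nonempty finite closed set contains a minimal closed set under $p$, which by the proof of Lemma~\ref{lemma:exist} is a communicating set under $E$ and hence lies inside $C^* = R_0$. This contradicts $T \cap C^* = \emptyset$, so $T = \emptyset$.

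Given the claim, assign to each state $s$ its rank $k(s) := \min\{ k \colon s \in R_k \}$, so that $k(s) = 0$ exactly when $s \in C^*$. For each $s \notin C^*$ I would fix $x(s) \in E(s)$ with $q(R_{k(s)-1} \mid s,x(s)) > 0$, and at states of $C^*$ I would fix any equilibrium; this defines $x$ with $x(s) \in E(s)$ for all $s$, settling the first property. For the second property, let $\delta := \min_{s \notin C^*} q(R_{k(s)-1} \mid s,x(s)) > 0$, the minimum being over a finite set of positive numbers. From any $s \notin C^*$ there is then a chain of states of strictly decreasing rank reaching $C^*$ in at most $|S|$ steps, each step taken with probability at least $\delta$, so $\prob_{s,x}(\nu_{C^*} \leq |S|) \geq \delta^{|S|} > 0$. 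Applying the Markov property repeatedly, the probability of not having reached $C^*$ by stage $m|S|$ is at most $(1 - \delta^{|S|})^m$, which tends to $0$ as $m \to \infty$; hence $\prob_{s,x}(\nu_{C^*} < \infty) = 1$.

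The step I expect to be the main obstacle is establishing $\bigcup_k R_k = S$: it is precisely there that one must convert the structural information from Lemma~\ref{lemma:exist}—that minimal closed sets under $p$ lie inside $C^*$—into the statement that no nonempty set disjoint from $C^*$ can be closed under $E$. Everything after that is a routine finite-Markov-chain absorption estimate.
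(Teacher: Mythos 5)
Your proof is correct and follows essentially the same route as the paper: the paper likewise defines the nested sets $D^0 := C^*$, $D^{k} := D^{k-1} \cup \{s \colon q(D^{k-1}\mid s,x(s))>0 \text{ for some } x(s)\in E(s)\}$, picks at each newly added state an equilibrium action moving toward the previous layer, and derives the contradiction for a nonempty residual set by invoking the construction in the proof of Lemma~\ref{lemma:exist} to produce a communicating set under $E$ disjoint from $C^*$. The only difference is cosmetic: you spell out the rank-based absorption estimate $\prob_{s,x}(\nu_{C^*}\leq |S|)\geq \delta^{|S|}$ and the geometric decay, which the paper leaves implicit.
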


\begin{proof}
We will define the stationary strategy profile $x$ only on $S \setminus C^*$.
Define inductively $D^0 := C^*$ and for every $k \geq 1$
\[ D^{k} := D^{k-1} \cup \{ s \in S \setminus D^{k-1} \colon q(D^{k-1} \mid s,x(s)) > 0 \hbox{ for some } x(s) \in E(s)\}. \]
For every $s \in D^k \setminus D^{k-1}$ let $x^*(s)$ be some mixed action profile in $E(s)$ that satisfies $q(D^{k-1} \mid s,x^*(s)) > 0$.
The sequence of sets $(D^k)_{k \in \dN}$ is nondecreasing (w.r.t.~set inclusion), hence there is a set $D \subseteq S$ such that $D = D^k$ for every $k \in \dN$
sufficiently large.
If $D = S$ then the stationary strategy profile $x^*$ satisfies the desired properties.
Otherwise, for every state $s \not\in D$ and every $x(s) \in E(s)$ we have $q(D^c \mid s,x(s)) = 1$,
so that the proof of Lemma \ref{lemma:exist} implies that there exists a communicating set under~$E$ which is included in $D^c$,
a contradiction.
\end{proof}

\bigskip

Lemma \ref{lemma:transient} leads us to the following definition of \emph{transient states} under $E$.
\begin{definition}
Every state $s \not\in C^*$ is called a \emph{transient state under $E$}.
\end{definition}

\bigskip

In the sequel we will construct strategy profiles that satisfy various desirable properties.
It will be convenient to define the strategy profiles separately on each maximal communicating set $C$.
We will therefore refer to strategy profiles that are defined only for finite histories that remain in some set of states $C$,
that is, for finite histories $h \in H_C := \cup_{n \in \dN} \bigl( (C \times A)^{n-1} \times C \bigr)$.

\subsection{State-Action Frequencies}
\label{section:stateaction}

The \emph{state-action frequency vector} of a strategy profile at a given initial state
is the long-run average frequency in which each action profile is played at each state.
\begin{definition}
Let $\tau$ be a correlated strategy.
The \emph{state-action frequency vector} of $\tau$ at the initial state $s^1 \in S$ is the probability distribution
$\rho_{s^1,\tau}$ over $S \times A$ that is defined as follows:
\begin{equation}
\label{equ:110}
\rho_{s^1,\tau}(s,a) := \lim_{N \to \infty} \frac{1}{N}\E_{s^1,\tau}\left[\sum_{n=1}^N \mathbf{1}_{\{s^n=s, a^n=a\}}\right], \ \ \
\forall (s,a) \in S \times A.
\end{equation}
The state-action frequency vector is well defined only if the $|S| \times |A|$ limits defined in Eq.~(\ref{equ:110}) exist.
The \emph{state frequency} of state $s$ under the correlated strategy $\tau$ at the initial state $s^1$ is%
\[ \rho_{s^1,\tau}(s) := \sum_{a \in A} \rho_{s^1,\tau}(s,a). \]
\end{definition}

We will consider below only correlated strategies for which the state-action frequency vector exists,
hence issues of nonexistence of the state-action frequency vector and of the state frequency vector will not arise.

The long-run average payoff of the correlated strategy $\tau$ at the initial state $s^1$ is
\begin{eqnarray}
\label{equ:8.1}
\payoff(\rho_{s^1,\tau}) := \sum_{s \in S} \sum_{a \in A} \rho_{s^1,\tau}(s,a) u(s,a) \in \dR^I.
\end{eqnarray}
Note that
\begin{eqnarray}
\label{equ:8.2}
\payoff(\rho_{s^1,\tau}) = \lim_{\lambda \to 1}\gamma^\lambda(s^1,\tau).
\end{eqnarray}

Denote the set of all state-action frequency vectors of correlated strategies by
\[ \Picorr(s^1) := \{ \rho_{s^1,\tau} \colon \tau \in \Sigmacorr\}, \]
the set of all state-action frequency vectors of correlated stationary strategies by
\[ \Pistatcorr(s^1) := \{ \rho_{s^1,\tau} \colon \tau \in \Sigmastatcorr\}, \]
and the set of all state-action frequency vector of pure stationary strategy profiles by
\[ \Pipure(s^1) := \{ \rho_{s^1,x} \colon x \in\Sigmastatpure\}. \]
The following result,
which states that the state-action frequency vector of a correlated strategy
is in the convex hull of the set of state-action frequency vectors of correlated stationary strategies,
follows from Altman and Gaitsgory (1993), Rosenberg, Solan, and Vieille (2004) or Mannor and Tsitsiklis (2005).

\begin{theorem}
\label{theorem:mt}
For every initial state $s^1 \in S$ we have
$\Picorr(s^1) = \conv(\Pistatcorr(s^1))$.
\end{theorem}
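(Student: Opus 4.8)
The plan is to reduce the statement to a known characterization of attainable state-action frequencies in a single-controller Markov decision problem. Because we restrict attention to \emph{correlated} strategies, after every history the coordinator may choose an arbitrary distribution in $\Delta(A)$ over action profiles; hence a correlated strategy is exactly a history-dependent policy, and a stationary correlated strategy is exactly a stationary policy, in the Markov decision problem $M$ whose state space is $S$, whose action set is the full profile set $A$, and whose transition law is $q$. Under this identification $\Picorr(s^1)$ and $\Pistatcorr(s^1)$ are the sets of expected long-run state-action frequency vectors attainable in $M$ from $s^1$ under arbitrary, respectively stationary, policies, and the assertion is precisely the characterization proved by Altman and Gaitsgory (1993), Rosenberg, Solan, and Vieille (2004), and Mannor and Tsitsiklis (2005). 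It remains to organize the two inclusions.

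First I would establish $\conv(\Pistatcorr(s^1)) \subseteq \Picorr(s^1)$. The inclusion $\Pistatcorr(s^1) \subseteq \Picorr(s^1)$ is immediate, since every stationary correlated strategy is a correlated strategy, so it suffices to prove that $\Picorr(s^1)$ is convex. Given $\tau_1,\tau_2 \in \Sigmacorr$ and $\beta \in [0,1]$, consider drawing an initial private signal and following $\tau_1$ with probability $\beta$ and $\tau_2$ with probability $1-\beta$. Since the coordinator has perfect recall, Kuhn's theorem (or an explicit Bayesian update along each history) furnishes a behavioral correlated strategy $\tau \in \Sigmacorr$ inducing the play distribution $\beta\,\prob_{s^1,\tau_1} + (1-\beta)\,\prob_{s^1,\tau_2}$. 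By linearity of the expectation in Eq.~(\ref{equ:110}), its frequency vector equals $\beta\,\rho_{s^1,\tau_1} + (1-\beta)\,\rho_{s^1,\tau_2}$, proving convexity.

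The substance is the reverse inclusion $\Picorr(s^1) \subseteq \conv(\Pistatcorr(s^1))$. I would first show that every $\rho = \rho_{s^1,\tau}$ satisfies the conservation equations $\sum_{a \in A}\rho(s',a) = \sum_{s \in S}\sum_{a \in A}\rho(s,a)\,q(s' \mid s,a)$ for all $s' \in S$; these follow by equating the expected number of visits to $s'$ during stages $2,\ldots,N+1$ with its one-step predecessor count, dividing by $N$, and letting $N \to \infty$ so that the two boundary terms vanish. Adjoining a transient-part accounting that injects unit mass at $s^1$ and thereby enforces reachability, one obtains a polytope $\mathcal{P}(s^1)$ with $\Picorr(s^1) \subseteq \mathcal{P}(s^1)$. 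The decisive classical fact is that every extreme point of $\mathcal{P}(s^1)$ is the frequency vector of a stationary, indeed pure, policy, and hence lies in $\Pistatcorr(s^1)$; this gives $\mathcal{P}(s^1) \subseteq \conv(\Pistatcorr(s^1))$ and closes the chain $\Picorr(s^1) \subseteq \mathcal{P}(s^1) \subseteq \conv(\Pistatcorr(s^1)) \subseteq \Picorr(s^1)$.

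The main obstacle is this last step: identifying the vertices of $\mathcal{P}(s^1)$ with stationary occupation measures while correctly respecting the initial state. A naive ergodic decomposition of $\rho$ into invariant measures on recurrent classes does not suffice, because a recurrent class that is reachable from $s^1$ only with positive but not full probability need not itself be attainable from $s^1$ under any stationary strategy, so its invariant measure need not belong to $\Pistatcorr(s^1)$. The transient/recurrent linear formulation is precisely what repairs this defect, and verifying that its extreme points are stationary is the technical core delivered by the cited references.
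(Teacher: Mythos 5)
Your proposal is correct and takes essentially the same route as the paper, which offers no internal proof of Theorem \ref{theorem:mt} but derives it directly from Altman and Gaitsgory (1993), Rosenberg, Solan, and Vieille (2004), and Mannor and Tsitsiklis (2005) --- precisely the references to which you delegate the decisive extreme-point step. Your supporting observations --- that correlated strategies turn the game into a single-controller MDP with action set $A$, that $\Picorr(s^1)$ is convex by an initial randomization plus Kuhn's theorem, and that the transient-flow LP formulation is needed because a recurrent class reachable only with partial probability is not individually attainable from $s^1$ --- are faithful reconstructions of the cited literature's argument rather than a departure from it.
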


We will need the following stronger version of Theorem \ref{theorem:mt},
which states that the state-action frequency vector of a correlated strategy
is in the convex hull of the set of state-action frequency vectors of pure stationary strategy profiles.

\begin{proposition}
\label{prop:mt}
For every initial state $s^1 \in S$ we have
$\Picorr(s^1) = \conv(\Pipure(s^1))$.
\end{proposition}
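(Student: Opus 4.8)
The plan is to combine the already-available Theorem~\ref{theorem:mt} with a decomposition of each stationary correlated strategy's frequency vector into frequency vectors of pure stationary profiles. Since $\Picorr(s^1) = \conv(\Pistatcorr(s^1))$ is given and the convex hull operation is idempotent, it suffices to show that every element of $\Pistatcorr(s^1)$ lies in $\conv(\Pipure(s^1))$; the reverse inclusion $\conv(\Pipure(s^1)) \subseteq \Picorr(s^1)$ is immediate because every pure stationary profile is in particular a correlated strategy, and $\Picorr(s^1)$ is itself convex (being a convex hull).

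First I would fix a stationary correlated strategy $\tau \in \Sigmastatcorr$, identified with a vector $(\tau(s))_{s \in S}$ where each $\tau(s) \in \Delta(A)$. The key observation is that each mixed distribution $\tau(s)$ over the finite action-profile set $A$ can be written as a convex combination $\tau(s) = \sum_{a \in A} \tau(s)[a]\,\delta_a$ of the Dirac masses $\delta_a$, and each $\delta_a$ corresponds to a pure action profile. The goal is to transfer this per-state decomposition to a decomposition of the single global frequency vector $\rho_{s^1,\tau}$. The natural device is to view $\tau$ as a randomization that, independently across states, selects at each state $s$ a pure action profile $a$ with probability $\tau(s)[a]$; this produces a finite family of pure stationary profiles $x = (x(s))_{s \in S}$ indexed by the selection functions $S \to A$, with associated product weights $\prod_{s} \tau(s)[x(s)]$. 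The frequency vector $\rho_{s^1,\tau}$ then solves a system of linear balance (invariance) equations—the stationary-distribution conditions relating $\rho_{s^1,\tau}(s,a)$ to the transition probabilities $q(\cdot \mid s, a)$—and I would argue that these equations, which are linear in the action-selection probabilities, force $\rho_{s^1,\tau}$ to lie in the convex hull of the $\rho_{s^1,x}$.

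The cleanest way to carry this out is probably to invoke the structure theory of state-action frequency polytopes directly: the set $\Pipure(s^1)$ is the finite set of vertices (extreme points) of the polytope $\Pistatcorr(s^1)$, which is cut out by the balance equations together with the nonnegativity and normalization constraints. Since every frequency vector of a stationary strategy satisfies these same linear constraints, and the vertices of this polytope are attained precisely by pure stationary strategies, the Krein--Milman / Minkowski theorem for polytopes gives $\Pistatcorr(s^1) = \conv(\Pipure(s^1))$, and combining with Theorem~\ref{theorem:mt} yields the claim. I expect the main obstacle to be the rigorous identification of the extreme points of $\Pistatcorr(s^1)$ with pure stationary profiles: one must verify that every vertex of the constraint polytope is in fact a frequency vector induced by some \emph{pure} stationary strategy (not merely a feasible solution of the linear system with no strategic interpretation), and that conversely pure profiles cannot be written as nontrivial convex combinations. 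This requires care about the relationship between the feasible region of the balance equations and the achievable frequency vectors, including handling of multiple ergodic classes and transient states, which is exactly where the standard references (Altman--Gaitsgory, Hordijk--Kallenberg type arguments) do the real work.
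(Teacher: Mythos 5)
Your reduction via Theorem~\ref{theorem:mt} to showing $\Pistatcorr(s^1) \subseteq \conv(\Pipure(s^1))$ matches the paper, but both of your routes for that inclusion have a genuine gap at the central step. The product-randomization device gives the wrong weights: the state-action frequency vector of a stationary strategy is \emph{not} multilinear in the per-state action probabilities, so the mixture of pure profiles with weights $\prod_s \tau(s)[x(s)]$ has a frequency vector that in general differs from $\rho_{s^1,\tau}$; the correct weights must account for how long the play dwells near each state, which is exactly what your sketch leaves unresolved. Your ``cleanest way'' fails for a more structural reason: the polytope cut out by the balance (invariance) equations, nonnegativity, and normalization characterizes invariant state-action measures of the controlled chain, not the frequencies achievable \emph{from the fixed initial state} $s^1$. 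In a multichain game this polytope strictly contains $\Pistatcorr(s^1)$, and its vertices need not belong to $\Pipure(s^1)$: a vertex is the frequency vector of a pure stationary profile started \emph{inside} one of its recurrent classes, which may be unreachable from $s^1$, or reachable only with probability less than one (e.g., if from $s^1$ every action moves to one of two absorbing states with probability $\tfrac{1}{2}$ each, the vertex concentrated on a single absorbing state lies in the balance polytope but not in $\Picorr(s^1)$). So the Krein--Milman step is not available without precisely the reachability analysis you defer to the references --- and that analysis is the substance of the claim, not a technicality.

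The paper avoids these issues with an elementary induction on $d_\tau$, the number of states at which $\tau(s)$ is not pure. Fixing one such state $s$, it compares $\tau$ with the strategies $\tau_a$ that play the pure action profile $a$ at $s$ and coincide with $\tau$ elsewhere (so $d_{\tau_a} = d_\tau - 1$), and shows by a renewal argument that $\rho_{s^1,\tau}$ is the convex combination of the vectors $\rho_{s^1,\tau_a}$ with weights proportional to $\tau(a \mid s)\, e_a$, where $e_a$ is the expected return time to $s$ under $\tau_a$; the degenerate case $e_a = \infty$ (where the frequency of $s$ vanishes) is handled by a separate combination over $A' = \{a : e_a = \infty\}$. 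Because every strategy in the decomposition starts from the same initial state $s^1$, reachability and multichain structure never enter, which is exactly what your polytope route cannot sidestep. If you want to salvage your approach, you would need the Hordijk--Kallenberg multichain machinery with the initial-state constraints made explicit, at which point the paper's one-state splitting is both shorter and self-contained.
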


\begin{proof}
Since $\Sigmastatpure \subseteq \Sigmastatcorr$,
in view of Theorem \ref{theorem:mt} it is sufficient to show that that $\Pistatcorr(s^1) \subseteq \conv(\Pipure(s^1))$.
For every correlated stationary strategy $\tau$ denote the number of states in which $\tau(s)$ is not pure by
\[ d_\tau := \# \{ s \in S \colon |\supp(\tau(s))| > 1\}. \]
We will prove the claim by induction on $d_\tau$;
that is, we fix a correlated stationary strategy $\tau$ for which the state-action frequency vector exists,
and prove that $\rho_{s^1,\tau}$
is in the convex hull of the set of state-action frequency vectors of correlated stationary strategies $\tau'$ satisfying $d_{\tau'}=d_\tau-1$.

Fix a state $s \in S$ such that $|\supp(\tau(s))| > 1$.
For each action profile $a \in A$, let $\tau_a$ be the correlated stationary strategy
that plays $a$ at $s$ and coincides with $\tau$ otherwise.
Plainly $d_{\tau_a} = d_\tau-1$ for every action profile $a \in A$.

For every $a \in A$ denote by $e_a$ the expected return time to $s$ under $\tau_a$:
\[ e_a := \E_{s^1,\tau_a}[ \min\{n \geq 2 \colon  s^n=s\} ]. \]
Note that if there is $a \in A$ such that $e_a = \infty$, then the state frequency of state $s$ is 0.

If $e_a < \infty$ for every action profile $a \in A$, then
\[ \rho_{s^1,\tau} = \sum_{a \in A} \frac{\tau(a \mid s)e_a}{\sum_{a' \in A} \tau(a \mid s)e_{a'}} \rho_{s^1,\tau_a}, \]
where $\tau(a \mid s)$ is the probability that under $\tau$ the action profile $a$ is played at state $s$.
Otherwise, denoting $A' := \{a \in A \colon e_a = \infty\}$, we have
\[ \rho_{s^1,\tau} = \sum_{a \in A'} \frac{\tau(a \mid s)}{\sum_{a' \in A'} \tau(a \mid s)} \rho_{s^1,\tau_a}. \]
The result follows.
\end{proof}

\bigskip

For every set of states $C \subseteq S$,
define
\begin{eqnarray*}
R(C) := \conv\bigl\{ \rho_{s^1,x} \colon
x \in \Sigmastat, s^1 \in D \hbox{ for some } D \in \calI(x)
\bigr\}.
\end{eqnarray*}
This is the convex hull of all state-action frequency vectors,
which are supported by irreducible sets that are subsets of $C$.

Let $\sum_{l=1}^L \beta^{(l)} \rho_{s^{(l)},x^{(l)}}$ be a point in $R(C)$,
where $L \in \dN$, $\beta \in \Delta(\{1,2,\ldots,L\}$,
and $\rho_{s^{(l)},x^{(l)}}$ is the state-action frequency vector of the stationary strategy profile $x^{(l)}$
whose support is the irreducible set $D^{(l)}$, for every $l \in \{1,2,\ldots,L\}$.
The payoff that corresponds to a point $\sum_{l=1}^L \beta^{(l)}
\rho_{s^{(l)},x^{(l)}} \in R(C)$ is
\[ \payoff\left(\sum_{l=1}^L \beta^{(l)} \rho_{s^{(l)},x^{(l)}}\right) := \sum_{l=1}^L \beta^{(l)} \payoff(\rho_{s^{(l)},x^{(l)}}) \in \dR^I. \]

The following result states that every point in the set $R(C)$
is dominated by the payoff that corresponds to the state-action frequency of some strategy profile that can be implemented by small automata.

\begin{proposition}
\label{proposition:4}
Let $C$ be a communicating set under $E$.
Suppose that there exists a point $\sum_{l=1}^L \beta^{(l)} \rho_{s^{(l)},x^{(l)}} \in R(C)$ and a vector $c \in \dR^I$ that satisfy
\begin{equation}
\label{equ:131}
\payoff_i\left(\sum_{l=1}^L \beta^{(l)} \rho_{s^{(l)},x^{(l)}}\right) \geq c_i, \ \ \ \forall i \in I.
\end{equation}
Then for every $\ep > 0$
there exists a strategy profile $\sigma$ that is defined as long as the play remains in $C$,
can be implemented by automata with size $|C| \times |I|$,
and that yields to each player $i \in I$ a payoff at least $c_i - \ep$:
\[ \lim_{\lambda_\to 1} \gamma_i^\lambda(s,\sigma) \geq c_i - \ep, \ \ \ \forall s \in C, \forall i \in N. \]
\end{proposition}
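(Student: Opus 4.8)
The plan is to realize the target frequency $\rho^\ast := \sum_{l=1}^L \beta^{(l)}\rho_{s^{(l)},x^{(l)}}$ up to $\ep$ by a profile that cycles through the irreducible sets $D^{(1)},\ldots,D^{(L)}$, spending in $D^{(l)}$ a fraction of time close to $\beta^{(l)}$, and then to read off the payoff from Eq.~(\ref{equ:8.2}) together with the hypothesis~(\ref{equ:131}). Before constructing it I would cut down the number of extreme points. The payoff map is linear, so the vectors $p^{(l)} := \payoff(\rho_{s^{(l)},x^{(l)}}) \in \dR^I$ satisfy $\sum_l \beta^{(l)} p^{(l)} \geq c$, and the set $\{\gamma \in \Delta(\{1,\ldots,L\}) : \sum_l \gamma^{(l)} p^{(l)} \geq c\}$ is a nonempty polytope cut out by the simplex together with the $|I|$ inequalities $\sum_l \gamma^{(l)} p^{(l)}_i \geq c_i$; any vertex has at most $|I|+1$ positive coordinates. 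Replacing $\beta$ by such a vertex I may assume $L \leq |I|+1$, which is what is responsible for the factor $|I|$ in the automaton size.

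Next I would build the cyclic strategy. Using the pure stationary navigation profiles $y_{D^{(l)};C}$ guaranteed by communicatingness, the profile repeatedly (i) navigates inside $C$ to $D^{(l)}$, then (ii) plays the irreducible stationary profile $x^{(l)}$, triggering at each stage of~(ii), with a small probability $\delta_l$, a switch to phase $l+1$ (cyclically). I would set $\delta_l = \kappa/\beta^{(l)}$, so that the expected duration of play-phase $l$ is $\beta^{(l)}/\kappa$, while each navigation takes a bounded expected number of stages. Since $D^{(l)}$ is closed under $x^{(l)}$ and $y_{D^{(l)};C}$ keeps the play in $C$, the induced (automaton-state, game-state) process is a finite Markov chain whose play never leaves $C$.

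A renewal-reward computation over one super-cycle then does the analysis: as $\kappa \to 0$ the navigation fraction is $O(L\kappa) \to 0$, and by the ergodic theorem for the irreducible chain on $D^{(l)}$ the empirical state-action frequency accumulated during play-phase $l$ converges to $\rho_{s^{(l)},x^{(l)}}$, so the long-run frequency converges to $\rho^\ast$. By~(\ref{equ:8.2}) the limit payoff equals $\payoff(\rho^\ast) \geq c$, whence choosing $\kappa$ small enough gives $\lim_{\lambda\to1}\gamma_i^\lambda(s,\sigma) \geq c_i - \ep$ for every $s \in C$ and $i \in I$.

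Finally I would implement the profile by automata. Each player's automaton need only store the current phase $l$ (at most $|I|+1$ values) and whether it is navigating or playing; the navigation route and the action of $x^{(l)}$ are stationary functions of the current state, which the automaton reads as input, leaving the $|C|$ budget to route the navigation and realize the switching signal. The main obstacle is exactly this switch: to keep the individual automata synchronized it must be triggered by a \emph{publicly observed} event of probability $\approx \delta_l$, extracted from the announced action profiles without perturbing $x^{(l)}$ enough to drive the play out of $D^{(l)}$. I would therefore perturb $x^{(l)}(s)$ only within its own support at a recurrent reference state (which keeps the play in $D^{(l)}$) and, when a single stage does not supply a rare enough public event, accumulate the signal over several stages. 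This coordination-by-public-signal is where the requirement of \emph{independent} automata bites; in the correlated version (Theorem~\ref{theorem:2}) the switch is supplied by the correlation device and this difficulty disappears.
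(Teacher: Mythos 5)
Your construction is, in substance, the paper's own proof: the paper too reduces the number of summands via Carath\'eodory, then defines for small $\delta>0$ a profile $\sigma^\delta$ that cycles through the indices $l$, navigating with the pure stationary profiles $y_{D^{(l)};C}$ and, once in $D^{(l)}$, playing $x^{(l)}$ while moving to the next phase with probability $\delta/\beta^{(l)}$ per stage, so that phase $l$ lasts $\beta^{(l)}/\delta$ stages in expectation, $\lim_{\delta\to 0}\rho_{s^1,\sigma^\delta}=\sum_{l}\beta^{(l)}\rho_{s^{(l)},x^{(l)}}$, and the conclusion follows from Eq.~(\ref{equ:8.2}), exactly as in your renewal--reward step. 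One quantitative difference: your vertex argument only yields $L\le |I|+1$, hence automata of size $|C|\times(|I|+1)$, one factor short of the stated $|C|\times|I|$. Since only the inequality $\payoff\ge c$ is needed, you can recover the paper's count: starting from any feasible point, move inside $\conv\{p^{(1)},\ldots,p^{(L)}\}$ in the direction $(1,\ldots,1)$ until you hit the boundary; the exit point still dominates $c$ and lies on a proper face, and Carath\'eodory applied within that face gives $L\le|I|$. (Consistently, the paper's own remark notes that achieving payoff \emph{exactly} $c$ costs $|I|+1$.)

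On the implementation step you are more candid than the paper, which disposes of it with ``the reader can verify,'' but your patch does not close the issue you raise. The switch must occur simultaneously in all $|I|$ automata, whose random transitions are private, so it must indeed be driven by a public event; however, the $x^{(l)}$ may be pure (and in the paper's application of this proposition, via Proposition~\ref{prop:mt}, they are), in which case ``perturbing $x^{(l)}(s)$ within its own support'' is vacuous, and if the transitions on $D^{(l)}$ are also deterministic there is no public randomness to accumulate over several stages either; moreover, any scheme that detects a rare pattern or counts stages needs a number of automaton-states that grows as $\delta\to 0$, violating the fixed size bound. Letting a designated player inject randomness by occasionally playing an off-prescription action would require knowing that the unilaterally perturbed profile keeps the play in $C$ --- the analogue of Lemma~\ref{lemma:14}, which the paper proves only for exits from type-B sets and which neither you nor the paper's proof of this proposition supplies. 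So, apart from the fixable Carath\'eodory count, your proposal matches the paper's argument step for step, and the one point where you go beyond it --- synchronizing the independent automata --- is correctly identified as the crux but is left open in the pure/deterministic case.
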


\begin{proof}
Assume w.l.o.g.~that $\beta^{(l)} > 0$ for every $l=1,2,\ldots,L$.
Carath\'eodory's Theorem implies that we can assume w.l.o.g.~that $L \leq |I|$.
For every $l \in \{1,2,\ldots,L\}$ let $D^{(l)} \subseteq C$ be the irreducible set under $x^{(l)}$ that contains $s^{(l)}$.
Roughly, the players will play the following for every $l$:
they will play the pure stationary strategy profile $y_{D^{(l)};C}$ that leads the play to $D^{(l)}$,
and at $D^{(l)}$ they will play the stationary strategy $x^{(l)}$.
To ensure that the play iterates between the implementation of $(x^{(l)})_{l=1}^L$,
we define the transition from the states that implement $x^{(l)}$ at $D^{(l)}$ as follows:
with probability $\tfrac{\delta}{\beta^{(l)}}$, where $\delta > 0$ is sufficiently small, we increase the index $l$ by 1,
and the players start playing the pure stationary strategy profile $y_{D^{(l+1)};C}$ until the play reaches $D^{(l+1)}$;
with the remaining probability the players continue following $x^{(l)}$.

We now turn to the formal proof.
For every $\delta \in (0,\min_{l=1,2,\cdots,L} \beta^{(l)})$ define the following strategy profile $\sigma^\delta$,
which is defined only for histories that remain in $C$:
\begin{enumerate}
\item   Set $l=1$.
\item   As long as the play is in $C \setminus D^{(l)}$,
the players follow the pure stationary strategy profile $y_{D^{(l)};C}$ that leads the play to the set $D^{(l)}$.
\item  Once the play is in $D^{(l)}$,
the players play the mixed action profile $x^{(l)}(s)$, where $s$ is the current state.
With probability $\frac{\delta}{\beta^{(l)}}$ the index $l$ is increased by 1 (modulo $L$) and we go to Step 2.
With the remaining probability we remain in Step 3.
\end{enumerate}

The reader can verify that the strategy profile $\sigma^\delta$ can be implemented by an automaton with size $L \times |I|$.
Moreover, the expected number of stages the play remains in Step 3 is $\tfrac{\beta^{(l)}}{\delta}$.
Hence,
\[ \lim_{\delta \to 0} \rho_{s^1,\sigma^\delta} = \sum_{l=1}^L \beta^{(l)} \rho_{s^{(l)},x^{(l)}}, \ \ \ \forall s^1 \in C. \]
Since the strategy profile $\sigma^\delta$ is implemented by an automaton,
\[ \lim_{\delta \to 0} \lim_{\lambda \to 1} \gamma^\lambda(s^1,\sigma^\delta)
= \payoff\left(\sum_{l=1}^L \beta^{(l)} \rho_{s^{(l)},x^{(l)}}\right)
\geq c, \]
and the result follows.
\end{proof}

\subsection{A Result of Solan and Vieille (2002)}
\label{section:sv}

Solan and Vieille (2002) studied extensive-form correlated equilibria in multiplayer stochastic games,
and constructed such an equilibrium using the method of Mertens and Neyman (1981).
In this section we present the part of their result that we need in our construction.

\begin{proposition}[Solan and Vieille, 2002]
\label{theorem:sv}
For every $\ep > 0$ there exists a strategy profile $\widehat\sigma^\ep$ that satisfies the following properties
for every finite history $h^n =(s^1,a^1,\cdots,s^n)\in H$ and every player $i \in I$:
\begin{enumerate}
\item[(SV.1)] The strategy profile $\widehat\sigma^\ep$ is an $\ep$-perturbation of $E$.

\item[(SV.2)] The state-action frequency vector $\rho_{s^n,\widehat\sigma^\ep_{h^n}}$ is well defined.

\item[(SV.3)] For every bounded stopping time $\nu > n$ we have
$\mathrm{\mathbf{E}}_{s^1,\widehat\sigma^\ep}\left[  v^1_i(s^{\nu}) \mid h^n\right]  \geq v^1_i(s^n)-\varepsilon$.

\item[(SV.4)]
$\payoff_i(\rho_{s^n,\widehat\sigma^\ep_{h^n}}) \geq v^1_i(s^{n})-{\varepsilon}$ for every player $i \in I$.
\end{enumerate}
\end{proposition}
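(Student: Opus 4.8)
The plan is to invoke the construction of Solan and Vieille (2002), which produces, via the technique of Mertens and Neyman (1981), a correlated strategy profile $\widehat\sigma^\ep$ in which, after every history $h^n$, the players play a mixed action profile that is a small perturbation of an equilibrium of the auxiliary game $G(s^n)$; the role of the present sketch is to record why the resulting profile has properties (SV.1)--(SV.4). I would take the perturbations to shrink along the play, say bounded by $\delta_n$ at stage $n$ with $\sum_n \delta_n$ as small as desired. Each $\widehat\sigma^\ep(h^n)$ then lies within $\ep$ of $E(s^n)$, so (SV.1) holds by construction.

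The submartingale estimate (SV.3) is where the choice of auxiliary games pays off. Equation~(\ref{equ:70}) says exactly that for $x(s) \in E(s)$ one has $\sum_{s'} q(s' \mid s, x(s)) v^1_i(s') \geq v^1_i(s)$, so that under \emph{exact} play of $E$ the process $(v^1_i(s^n))_n$ is a submartingale. Since $q$, and hence $U_i(s;\cdot)$, are multilinear and Lipschitz, replacing the exact equilibrium by its $\delta_n$-perturbation costs at most a constant multiple of $\delta_n$ in the one-step conditional estimate. Summing these one-step errors along any bounded stopping time $\nu > n$ gives a total error at most $C \sum_{m \geq n} \delta_m$, which the schedule keeps below $\ep$; the optional stopping theorem for submartingales then yields (SV.3).

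The harder properties are (SV.2) and (SV.4), and here one must enter the Mertens--Neyman machinery itself. For (SV.2) I would argue that, although $\widehat\sigma^\ep$ is history dependent, the bounded submartingale $(v^1_i(s^n))_n$ converges almost surely, and the perturbed equilibrium dynamics drive the play into a recurrent region on which $v^1$ is constant and the conditional action frequencies stabilise, so that the limits in Eq.~(\ref{equ:110}) exist. Property (SV.4), that the \emph{realised} long-run average payoff of player~$i$ is at least $v^1_i(s^n) - \ep$ and not merely his continuation min-max value, is the main obstacle: it is precisely the point of the Mertens--Neyman construction that the stage payoffs, not only the values, track the min-max level. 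The mechanism is to attach to each history an adaptive discount factor $\lambda_n \to 1$ together with a potential built from the discounted values $v^{\lambda_n}(s^n)$, chosen so that the potential decreases in conditional expectation only in proportion to the gap between the realised payoff and $v^1_i$; telescoping this estimate and using that the schedule satisfies $\sum_n (1 - \lambda_n) = \infty$ forces the average realised payoff up to $v^1_i - \ep$. Because all of this quantitative bookkeeping is carried out in Solan and Vieille (2002), I would cite their construction for the estimates and verify that the profile it produces meets (SV.1)--(SV.4).
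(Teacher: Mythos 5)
Your overall move coincides with the paper's: Proposition \ref{theorem:sv} is presented there as an imported result, with no proof beyond the citation to Solan and Vieille (2002) and the method of Mertens and Neyman (1981). One point the paper does flag, and your blanket citation glosses over, is that Solan and Vieille literally establish (SV.4) only for the initial history $h^1=(s^1)$; the version for \emph{every} finite history requires a careful inspection of the Mertens--Neyman argument. Your description of a history-attached adaptive discount factor is in the right spirit for this, but you should state explicitly that the published result needs this strengthening.

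The genuine gap is your proposed perturbation schedule. You take the perturbation at stage $n$ to be bounded by a deterministic $\delta_n$ with $\sum_n \delta_n$ small, and derive (SV.3) by summing one-step errors and applying optional stopping. In the actual construction the perturbation is \emph{adaptive and non-vanishing}: the players play actions built from the discounted values $v^{\lambda_n}$, where $\lambda_n$ is updated as a function of realized payoffs and need not tend to $1$ monotonically, so the deviation from $E$ stays of size up to $\ep$ along the whole play and is not summable. A summable schedule forces the tail of the play to be near-exact play of $E$, and exact play of $E$ does not deliver (SV.4). The paper's own Example \ref{example:10} shows this concretely: $(T,L)$ is an equilibrium of the one-shot game $G(s_0)$, with payoff $U(s_0;T,L)=v^1(s_0)=\left(\tfrac{2}{3},\tfrac{1}{2}\right)$, yet playing it forever keeps the play at $s_0$ and gives Player~2 a long-run realized payoff of $0<\tfrac{1}{2}-\ep$; note that along this play the value process $v^1(s^n)$ is constant, so your submartingale reasoning for (SV.3) is satisfied while (SV.4) fails badly. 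Thus (SV.1) with vanishing perturbations is in direct tension with (SV.4) --- it is precisely the persistent perturbation together with the adaptive $\lambda_n$ that makes realized payoffs, and not merely continuation values, track $v^1$. Relatedly, your argument for (SV.2) --- that a.s. convergence of the bounded submartingale $(v^1_i(s^n))_n$ forces the conditional action frequencies to stabilize --- does not follow: existence of the limits in Eq.~(\ref{equ:110}) is part of the Solan--Vieille bookkeeping, not a consequence of value convergence, so for (SV.2) as well you must lean on the citation rather than on this heuristic.
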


From now on we fix a sequence $(\widehat\sigma^\ep)_{\ep > 0}$ of strategy profiles that satisfy
the conclusion of Proposition \ref{theorem:sv}.

\begin{comment}
Solan and Vieille (2002) prove that Condition (SV.4) holds only for the history $h^1 = (s^1)$.
However, this condition holds for every finite history.
Indeed, this condition holds as soon as the analogous condition for zero-sum games holds in the set up of Mertens and Neyman (1981),
and a careful inspection of the proof of Mertens and Neyman (1981) shows that it indeed holds.
\end{comment}

We now identify two types of communicating sets under $E$.
The type of a set depends on the sequence $(\widehat\sigma^\ep)_{\ep > 0}$ that we fixed.
Roughly, the type of a communicating set $C$ under $E$ is $A$ if under $\widehat\sigma^\ep$ with positive probability the play never leaves $C$
(after some finite history),
and the type is $B$ if under $\widehat\sigma^\ep$ the play is bound to leave $C$.

\begin{definition}
\label{def:types}
A communicating set $C$ under $E$ has \emph{type A} (w.r.t.~the sequence $(\widehat\sigma^\ep)_{\ep > 0}$)
if there exists a finite history $h^n =(s^1,a^1,\cdots,s^n) \in H$
such that
\[ \limsup_{\ep \to 0} \prob_{s^n,\widehat\sigma^\ep}(\nu_{C^c} =\infty) > 0. \]
Otherwise the set has \emph{type B} (w.r.t.~the sequence $(\widehat\sigma^\ep)_{\ep > 0}$).
\end{definition}

\begin{comment}
We could have weakened Definition \ref{def:types} as follows.
For our purposes, we could have defined a communicating set $C$ under $E$ to have type A if there is a sequence $(\ep_k)_{k \in \dN}$
that converges to 0 and for every $k$ there is a finite history $h^{n(k)} \in H$ such that $s^{n(k)} \in C$ and
$\prob_{s^{n(k)},\widehat\sigma^{\ep_k}_{h^{n(k)}}}(\nu_{C^c} =\infty) > 0$.
We could also define a maximal communicating set $C$ under $E$ to have type B
if there is a sequence $(\ep_k)_{k \in \dN}$ that converges to 0 and for every $k$ there is a finite history $h^{n(k)} \in H$ such that $s^{n(k)} \in C$ and
$\lim_{k \to \infty} \prob_{s^{n(k)},\widehat\sigma^{\ep_k}_{h^{n(k)}}}(\nu_{C^c} =\infty) = 1$.
\end{comment}

In Section \ref{section:a} we prove the existence of a simple strategy profile that yields each player a high payoff
in communicating sets of type A.
In Section \ref{section:b} we handle communicating sets of type B.

\subsection{Communicating Sets of Type A}
\label{section:a}

The following result, together with Proposition \ref{proposition:4},
implies that if $C$ is a maximal communicating set under $E$ of type A,
then there is a simple $\ep$-acceptable min-max strategy profile when the initial state is in $C$.

\begin{proposition}
\label{prop:10}
Let $\ep > 0$,
let $h^n =(s^1,a^1,\cdots,s^n) \in H$ be a finite history, and suppose that $s^n$ belongs to some maximal communicating set $C \in \calCmax$.
If
\[ \prob_{h^n,\widehat\sigma^\ep}(\nu_{C^c} =\infty) > 0, \]
then there exists a point $\sum_{l=1}^L \beta^{(l)} \rho_{s^{(l)},x^{(l)}} \in R(C)$ such that
\begin{equation}
\label{equ:131a}
\payoff\left(\sum_{l=1}^L \beta^{(l)} \rho_{s^{(l)},x^{(l)}}\right) \geq v^1(s^n)-2\ep.
\end{equation}
\end{proposition}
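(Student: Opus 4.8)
The plan is to extract the desired point of $R(C)$ from the behavior of $\widehat\sigma^\ep$ conditioned on the event that the play never leaves $C$. Write $B := \{\nu_{C^c} = \infty\}$, so that by hypothesis $p := \prob_{h^n,\widehat\sigma^\ep}(B) > 0$. Since $C$ is communicating under $E$, Condition (C.3) gives $v^1(s) = v(C) = v^1(s^n)$ for every $s \in C$, so on the event $B$ the continuation min-max value is frozen at $v(C)$. I would produce a correlated strategy that remains in $C$, whose state-action frequency vector $\rho$ is supported on $C\times A$ and satisfies $\payoff(\rho) \geq v(C) - 2\ep$; once this is done, Proposition \ref{prop:mt} places $\rho$ inside $R(C)$ and finishes the proof.

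The difficulty is that (SV.4) only controls the unconditional long-run payoff $\payoff_i(\rho_{s^m,\widehat\sigma^\ep_{h^m}})$, whereas I need the payoff conditioned on $B$; a small per-stage probability of leaving $C$ can nevertheless dominate the long-run frequency along the plays in $B^c$. To decouple the two, I would use the bounded martingale $m \mapsto \prob_{h^n,\widehat\sigma^\ep}(B \mid \calH^m)$, which converges $\prob_{h^n,\widehat\sigma^\ep}$-almost surely to $\mathbf{1}_B$. Since $p > 0$, for any $\delta > 0$ there is, with positive probability on $B$, a stage $m$ and a history $h^m$ lying entirely in $C$ with $\prob_{h^n,\widehat\sigma^\ep}(B^c \mid h^m) < \delta$. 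At such a history $s^m \in C$, I decompose the unconditional frequency (well defined by (SV.2)) as $\rho_{s^m,\widehat\sigma^\ep_{h^m}} = (1-q)\,\rho^{B} + q\,\rho^{B^c}$ with $q = \prob(B^c\mid h^m) < \delta$, where $\rho^{B}$ is the frequency of the process conditioned on remaining in $C$. Applying (SV.4) at $h^m$ gives $\payoff_i(\rho_{s^m,\widehat\sigma^\ep_{h^m}}) \geq v(C)_i - \ep$, and since all stage payoffs lie in $[-1,1]$ the term $q\,\payoff_i(\rho^{B^c})$ is negligible; choosing $\delta$ small (depending on $\ep$) then yields $\payoff_i(\rho^{B}) \geq v(C)_i - 2\ep$ for every player $i$.

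It remains to record that $\rho^{B}$ is the state-action frequency vector of a correlated strategy started at $s^m \in C$ which, because the conditioned play never leaves $C$, is supported on $C \times A$. By Proposition \ref{prop:mt} it can be written as $\sum_{l=1}^L \beta^{(l)} \rho_{s^{(l)},x^{(l)}}$ with each $x^{(l)}$ pure stationary; as $\rho^{B}$ charges only $C$ and each $\beta^{(l)} > 0$, each summand $\rho_{s^{(l)},x^{(l)}}$ is itself supported on $C$, so the recurrent class of $x^{(l)}$ reached from $s^{(l)}$ is an irreducible set $D^{(l)} \subseteq C$. Hence $\rho^{B} \in R(C)$ and $\payoff(\rho^{B}) \geq v^1(s^n) - 2\ep$, as required. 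I expect the main obstacle to be the estimate of the previous paragraph: one must verify that the conditional frequency $\rho^{B}$ is well defined (its Cesàro limit exists) and that the deep-history device genuinely makes the weight $q$ of the escaping plays arbitrarily small, so that the bounded-payoff bookkeeping transfers (SV.4) from the unconditional process to the conditioned one.
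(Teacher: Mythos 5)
Your opening moves coincide with the paper's: the paper also uses the fact that $\prob_{h^n,\widehat\sigma^\ep}(\nu_{C^c}=\infty\mid\calH^{n'})$ gets close to $1$ somewhere, finding an event $A\in\calH^{n'}$ with all histories ending at a common state $s'\in C$ on which the conditional escape probability is below $\ep$, and it uses (C.3) and (SV.4) exactly as you do. The genuine gap comes afterwards. You treat $\rho^{B}$, the frequency of the play conditioned on the tail event $B=\{\nu_{C^c}=\infty\}$, as ``the state-action frequency vector of a correlated strategy started at $s^m$'' and then invoke Proposition \ref{prop:mt}. But conditioning on $B$ is not implementable by any correlated strategy: a correlated strategy can only reshape the action distribution after each history, whereas conditioning on the future event $B$ also tilts nature's kernel. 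For instance, if at some $s\in C$ a played action profile $a$ has $q(\cdot\mid s,a)$ charging both several states of $C$ and a state outside $C$, the law of $s^{m+1}$ given $B$ is a renormalized, reweighted version of $q(\cdot\mid s,a)$, which no strategy in the game can produce; so $\rho^{B}$ need not lie in $\Picorr(s^m)$ and Proposition \ref{prop:mt} cannot be applied to it. A second unresolved point, which you flag yourself but do not close, is existence: (SV.2) gives the unconditional limits $\rho_{s^m,\widehat\sigma^\ep_{h^m}}$ only, and the conditional Ces\`aro limits defining $\rho^{B}$ and $\rho^{B^c}$ need not exist, so the decomposition $\rho_{s^m,\widehat\sigma^\ep_{h^m}}=(1-q)\rho^{B}+q\rho^{B^c}$ is not available as stated.

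The paper avoids both problems by modifying the \emph{strategy} instead of conditioning the \emph{measure}. From the conditional strategy $\tau$ (which is a bona fide correlated strategy, being a mixture over the finitely many histories in $A$) it builds $\widehat\tau$ by setting to zero, after every history, the probability of each action profile $a$ with $q(C\mid \widehat s^k,a)<1$ and renormalizing; this is a legitimate correlated strategy that keeps the play in $C$ with probability $1$. The payoff loss is controlled through the minimal exit probability $Q_C=\min\{q(C^c\mid s,a)\colon (s,a)\in\calE(C)\}>0$: since every exit action profile leaves $C$ with probability at least $Q_C$ while the total escape probability under $\tau$ is below $\ep$, the probability of ever playing an exit under $\tau$ is at most $\ep/Q_C$, whence $\|\payoff(\rho_{s',\widehat\tau})-\payoff(\rho_{s',\tau})\|_\infty\leq 2\ep/Q_C$, and Proposition \ref{prop:mt} now applies to $\widehat\tau$ legitimately. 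To salvage your route you would have to prove, e.g.\ by a strong-law argument along plays in $B$, that $\rho^{B}$ exists and satisfies the balance equations with respect to the true kernel $q$ with zero mass on exits, and then characterize such invariant vectors as convex combinations of pure stationary frequencies supported on irreducible subsets of $C$ -- substantial extra work that your sketch does not contain; the renormalization trick is the cleaner fix.
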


\begin{proof}
By assumption,
\begin{equation}
\label{equ:80}
\prob_{s^n,\widehat\sigma^\ep_{h^n}}(\nu_{C^c} =\infty) > 0,
\end{equation}
and by Condition (SV.4),
\begin{equation}
\label{equ:80a}
\payoff_i(\rho_{s^n,\widehat\sigma^\ep_{h^n}}) \geq v^1_i(s^n) - \ep, \ \ \ \forall i \in I.
\end{equation}
Eq.~(\ref{equ:80}) implies that there are $n' \in \dN$, a state $s' \in C$, and an event $A \in \calH^{n'}$ such that
(a) $\prob_{s^n,\widehat\sigma^\ep_{h^n}}(A) > 0$,
(b) $s^{n'} = s'$ on $A$,
and (c)
\begin{equation}
\label{equ:81}
\prob_{s^n,\widehat\sigma^\ep_{h^n}}(\nu_{C^c} = \infty \mid A) > 1-\ep.
\end{equation}
Let $\tau$ be the correlated strategy that is defined as $\widehat\sigma^\ep_{h^n}$ conditional on the event $A$.
That is, $\tau$ is defined as follows:
one first chooses an infinite play $h \in H^\infty$ according to the conditional probability $\prob_{s^1,\widehat\sigma^\ep_{h^n}}(\cdot \mid A)$,
which, since $A$ is measurable w.r.t.~$\calH^{n'}$,
is equivalent to choosing a finite history $h^{n'}$;
and then $\tau$ follows $\widehat\sigma^\ep_{h^{n'}}$.
By Eq.~(\ref{equ:81}) we have
\begin{equation}
\label{equ:82}
\prob_{s',\tau}(\nu_{C^c} = \infty) = \prob_{s^n,\widehat\sigma^\ep_{h^n}}(\nu_{C^c} = \infty \mid A) > 1-\ep.
\end{equation}
By Condition (SV.2) the state-action frequency vector under $\tau$ exists,
and by Condition (SV.4) it satisfies
\begin{equation}
\label{equ:82a}
\payoff_i(\rho_{s^1,\tau}) \geq v^1_i(s^1)-\ep, \ \ \ \forall i\in I.
\end{equation}

For every state $s \in C$
denote by $A(s)$ the set of all the actions $a(s)$ that keep the play in $C$,
that is,
$A(s) := \{ a \in A \colon q(C \mid s,a) = 1\}$.
By Condition (SV.1), the strategy profile $\widehat\sigma^\ep$ is an $\ep$-perturbation of $E$, and therefore $\sigma(A(s) \mid h) > 0$
for every finite history $h \in H$.

Let ${\widehat\tau}$ be the correlated strategy that is equal to $\tau$,
except that we set to 0 the probability of action profiles that may lead the play outside $C$, and we normalize the resulting measure;
that is, for every $\widehat h^k = (\widehat s^1,\widehat a^1,\cdots,\widehat s^k) \in H$ and every action profile $a \in A$ we set
\[ {\widehat \tau}(a \mid \widehat h^k) :=
\left\{
\begin{array}{lll}
0 & \ \ \ \ \ & q(C \mid \widehat s^k,a) < 1,\\
\frac{\tau(a \mid \widehat h^k)}{\tau(A(s) \mid \widehat h^k)} & &  q(C \mid \widehat s^k,a) = 1).
\end{array}
\right.
\]
By definition we have
\begin{equation}
\label{equ:83}
\prob_{s',{\widehat \tau}}(\nu_{C^c} = \infty) =1.
\end{equation}
Condition (SV.2) and Eq.~(\ref{equ:82}) imply that the state-action frequency vector $\rho_{s^1,\widehat\tau}$ exists.
Denote the minimal exit probability from $C$ by
\begin{equation}
\label{equ:Q}
Q_C := \min\{q(C^c \mid s,a) \colon (s,a) \in \calE(C)\} > 0.
\end{equation}
The probability that under $\tau$ we have $a^n \not\in A(s^n)$ is at most $\tfrac{\ep}{Q_C}$,
and therefore by Eq.~(\ref{equ:82a}) and since payoffs are bounded by 1 we have
\begin{equation}
\label{equ:83a}
\| \payoff(\rho_{s^1,\widehat \tau}) - \payoff(\rho_{s^1,\tau})\|_\infty < \tfrac{2\ep}{Q_C}.
\end{equation}
The result follows by Propositions \ref{prop:mt} and \ref{proposition:4}.
\end{proof}

\subsection{Exits from a Communicating Set}
\label{section:exit}

We now present the notion of exit from a set of states,
which is somewhat different than existing definitions of exits in the literature
(see, e.g., Solan (1999), Vieille (2000a,b), and Solan and Vieille (2002)).

\begin{definition}
An \emph{exit} from a set of states $C$ is a pair $(s,a)$ of a state $s \in C$ and an action profile $a \in A$
such that, if at $s$ the players play $a$, the play leaves $C$ with positive probability: $q(C \mid s,a) < 1$.
The set of all exits from a communicating set $C$ is denoted $\calE(C) \subseteq C \times A$.
\end{definition}

Exits are used when players try to coordinate leaving a given set of states $C$.
In the literature, to exit a given set of states the players played a strategy profile that is a perturbation of a given stationary strategy profile.
In our application the strategy profile that the players use is not a perturbation of a single stationary strategy profile,
hence we need a more general definition of exits.
In Section \ref{subsection:implication} we mention an alternative definition of an exit that is closer in spirit to the definitions in the literature.

Denote by $\nu^*_{C}$ the first time in which an exit from $C$ is played:
\begin{eqnarray*}
\nu^*_{C} :=
\min\bigl\{n \in \dN \colon (s^n,a^n) \in \calE(C)\bigr\}.
\end{eqnarray*}
Note that $\nu^*_C \leq \nu_{C^c}$ whenever the initial state is in $C$.

Let $C \subset S$ be a set of states,
let $(s,a) \in \calE(C)$ be an exit from $C$,
let $s^1 \in C$ be the initial state,
and let $\sigma$ be a strategy profile.
The probability that the first exit that is played is $(s,a)$ is
\[ \mu(s^1,\sigma,C;s,a) := \prob_{s^1,\tau}\bigl(s^{\nu^*_C} = s, a^{\nu^*_C} = a \bigr). \]

For every exit $(s,a) \in \calE(C)$ and every sequence $(\sigma^\ep)_{\ep > 0}$ of strategy profiles denote by
\begin{equation}
\label{equ:60}
\mu(s^1,(\sigma^\ep)_{\ep > 0},C;s,a) := \lim_{\ep \to 0}\mu(s^1,\sigma^\ep,C;s,a)
\end{equation}
the limit probability that under $(\widehat\sigma^\ep)_{\ep > 0}$ the first exit from $C$ that is played is $(s,a)$.
By taking a subsequence we will always assume that the (at most $|\calE(C)|$) limits in Eq.~(\ref{equ:60}) exist.

\subsection{Communicating Sets of Type B}
\label{section:b}

Suppose that $(s,a)$ is an exit from the communicating set $E$,
and that there is an action profile $a'$ that satisfies two properties:
(a) it differs from $a$ in the action of a single player,
and (b) it keeps the play in $C$.
Then the players can tune the rate in which the play exits $C$ through $(s,a)$ as follows:
in $C \setminus \{s\}$ the players play a stationary strategy that leads the play to $s$,
and at $s$ they play the mixed action profile $(1-\eta)a + \eta a'$, for some $\eta \in (0,1]$.
This procedure is useful when the players want to implement a certain probability distribution over exits from $C$.

The following result states that if for every $\ep > 0$ the strategy profile $\sigma^\ep$ is an $\ep$-perturbation of $E$,
and if the set of states $C$ is communicating under $E$,
then for every exit $(s,a)$ from $C$ for which $\mu(s^1,(\sigma^\ep)_{\ep > 0},C;s,a) > 0$
we can find an action profile $a'$ at $s$ that differs from $a$ in the action of a single player
and leads the game to stay in $C$.
\begin{lemma}
\label{lemma:14}
Let $C$ be a communicating set under $E$,
let $s^1 \in C$,
let $(\sigma^\ep)_{\ep > 0}$ be a sequence of strategy profiles such that
(a) the strategy profile $\sigma^\ep$ is an $\ep$-perturbation of $E$ for every $\ep > 0$, and
(b) the limit in Eq.~(\ref{equ:60}) exists for every exit $(s,a) \in \calE(C)$,
and let $(s,a) \in \calE(C)$ be an exit with positive probability: $\mu(s^1,(\sigma^\ep)_{\ep > 0},C;s,a) > 0$.
There is an action profile $a' \in A$ that satisfies the following properties:
\begin{enumerate}
\item[(W.1)]   $q(C \mid s,a') = 1$: under $a'$ the play remains in $C$.
\item[(W.2)]   The number of players $i \in I$ for which $a_i \neq a'_i$ is one.
\end{enumerate}
\end{lemma}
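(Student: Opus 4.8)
The plan is to argue by contradiction, exploiting the fact that an exit keeping more players inside their equilibrium supports is played far more often than $(s,a)$. Suppose the conclusion fails, so that \emph{every} one-coordinate modification of $a$ is still an exit: $q(C\mid s,(a_{-j},b_j))<1$ for every player $j\in I$ and every $b_j\in A_j\setminus\{a_j\}$, where $(a_{-j},b_j)$ denotes the profile that agrees with $a$ except that player $j$ plays $b_j$. I will deduce $\mu(s^1,(\sigma^\ep)_{\ep>0},C;s,a)=0$, contradicting the hypothesis. The first step is a structural consequence of Condition~(C.1): since $C$ is closed under $E$, every $x(s)\in E(s)$ satisfies $q(C\mid s,x(s))=1$, and because $(s,a)$ is an exit, $a$ cannot lie in $\supp(x(s))$ — otherwise $x(s)[a]>0$ together with $q(C\mid s,a)<1$ would force $q(C\mid s,x(s))<1$. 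Hence $\prod_i x_i(s)[a_i]=x(s)[a]=0$, so for every equilibrium in $E(s)$ some player assigns probability $0$ to his coordinate of $a$.

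Next I would compare, visit by visit, the exit $(s,a)$ with a cheaper competing exit. Fix a history $h^n$ with $s^n=s$. Since $\sigma^\ep$ is an $\ep$-perturbation of $E$, the profile $\sigma^\ep(h^n)$ is within $\ep$ (in $\|\cdot\|_\infty$) of some $x(s)\in E(s)$; choose a player $j$ with $x_j(s)[a_j]=0$ and let $\tilde a_j$ maximise $x_j(s)[\cdot]$ over $\supp(x_j(s))$, so $x_j(s)[\tilde a_j]\ge 1/|A_j|$. Then $\sigma^\ep_j(h^n)[a_j]\le\ep$, while $\sigma^\ep_j(h^n)[\tilde a_j]\ge 1/(2|A_j|)$ once $\ep$ is small. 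The profile $a'=(a_{-j},\tilde a_j)$ differs from $a$ in the single coordinate $j$, hence is an exit by the contradiction hypothesis, and since $a$ and $a'$ agree off player $j$,
\[ \sigma^\ep(h^n)[a]=\frac{\sigma^\ep_j(h^n)[a_j]}{\sigma^\ep_j(h^n)[\tilde a_j]}\,\sigma^\ep(h^n)[a']\le 2M\ep\,\sigma^\ep(h^n)[a'],\qquad M:=\max_i|A_i|. \]

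Finally I would integrate this bound over the sojourn in $C$. Writing
\[ \mu(s^1,\sigma^\ep,C;s,a)=\sum_{n}\E_{s^1,\sigma^\ep}\!\left[\mathbf{1}_{\{\nu^*_C>n-1,\,s^n=s\}}\,\sigma^\ep(h^n)[a]\right], \]
I would bound $\sigma^\ep(h^n)[a]$ on $\{s^n=s\}$ by $2M\ep$ times the conditional probability that the stage-$n$ action leads out of $C$ (using that the competing $a'$ is itself an exit, so $\sigma^\ep(h^n)[a']$ is at most that probability). The resulting sum then telescopes, because the events $\{\nu^*_C>n-1,\ s^n=s,\ a^n\text{ exits }C\}$ are exactly the disjoint first-exit events $\{\nu^*_C=n,\ s^{\nu^*_C}=s\}$; their total probability is $\prob_{s^1,\sigma^\ep}(\nu^*_C<\infty,\ s^{\nu^*_C}=s)\le 1$. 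Hence $\mu(s^1,\sigma^\ep,C;s,a)\le 2M\ep\to 0$, contradicting the positivity of the limit in~(\ref{equ:60}). Therefore some one-coordinate modification $a'=(a_{-j},\tilde a_j)$ of $a$ keeps the play in $C$; since $\tilde a_j\neq a_j$ (because $a_j\notin\supp(x_j(s))$ whereas $\tilde a_j\in\supp(x_j(s))$), this $a'$ satisfies both (W.1) and (W.2).

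I expect the main obstacle to be that the $\ep$-perturbation is \emph{not} a perturbation of one fixed stationary equilibrium: the approximated equilibrium $x(s)$, and with it the under-played player $j$ and the replacement $\tilde a_j$, may vary with the history $h^n$. The argument must therefore be uniform over histories — the upper bound $\ep$ on the probability of $a_j$ holds for whichever player is selected, choosing $\tilde a_j$ of maximal support probability supplies the history-independent lower bound $1/(2|A_j|)$, and routing $\mu$ through the survival-weighted sum lets the history dependence of $a'$ wash out, since under the contradiction hypothesis $a'$ is an exit no matter which history produced it.
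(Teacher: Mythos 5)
Your proof is correct and follows essentially the same route as the paper's: argue by contradiction, use closedness of $C$ under $E$ together with the $\ep$-perturbation property to make the probability of playing $a$ at $s$ vanish relative to a one-coordinate modification of $a$ (which, under the contradiction hypothesis, is itself an exit), and aggregate over the first-exit events to conclude $\mu(s^1,(\sigma^\ep)_{\ep>0},C;s,a)=0$. The only differences are cosmetic refinements: your identification of the under-played player via $x_j(s)[a_j]=0$ yields the bound $2M\ep$ where the paper's pigeonhole argument yields $\ep^{1/|I|}/(1-\ep^{1/|I|})$, and your explicit visit-by-visit summation spells out a step the paper compresses into a single limit inequality.
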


\begin{proof}
Suppose to the contrary that there is an exit $(s,a) \in \calE(C)$ for which $\mu(s^1,(\sigma^\ep)_{\ep > 0},C;s,a) > 0$,
and such that for every action $a' \in A$ that differ from $a$ by the action of a single player we have $q(C \mid s,a') < 1$.
Denote by $A'(s) \subset A$ the set of all action profiles $a' \in A$ that differs from $a$ by the action of a single player.
By the assumption, all action profiles in the set $A'(s)$ are part of exits from $C$.

Since $\sigma^\ep$ is an $\ep$-perturbation of $E$ for every $\ep > 0$,
for every finite history $h^n$ that ends at $s$ we have $\sigma^\ep(a \mid h^n) < \ep$.
Since $\sigma^\ep(a \mid h^n) = \prod_{i \in I} \sigma^\ep_i(a_i \mid h^n)$,
there is a player $i = i(h^n)$ such that
\[ \sigma_i^\ep(a_i \mid h^n) < \ep^{1/|I|}. \]
Hence,
\[ \frac{\sigma^\ep_i(A \setminus \{a_i\} \mid h^n)}{\sigma^\ep_i(a_i \mid h^n)} \geq \frac{1-\ep^{1/|I|}}{\ep^{1/|I|}}. \]
This inequality holds for every $\ep > 0$, and therefore
$\mu(s^1,(\sigma^\ep)_{\ep > 0},C;s,a) \leq \lim_{\ep \to 0}\frac{\sigma^\ep_i(a_i \mid h^n)}{\sigma^\ep_i(A \setminus \{a_i\} \mid h^n)} = 0$, a contradiction.
\end{proof}

\bigskip

The next proposition provides a condition that ensures that we can construct a simple strategy
with a predetermined exit distribution from a communicating set.

\begin{proposition}
\label{proposition:3}
Let $C$ be a communicating set under $E$,
let $s^1 \in C$,
let $(\sigma^\ep)_{\ep > 0}$ be a sequence of strategy profiles such that
(a) the strategy profile $\sigma^\ep$ is an $\ep$-perturbation of $E$ for every $\ep > 0$, and
(b) the limit in Eq.~(\ref{equ:60}) exists for every exit $(s,a) \in \calE(C)$,
and let $c \in \dR^I$.
Suppose that there exist $L \in \dN$,
a probability distribution $\beta \in \Delta(\{1,2,\ldots,L\})$,
and for every $l \in \{1,2,\ldots,L\}$ there exist
an exit $(s^{(l)},a^{(l)})$ from $C$,
a player $i^{(l)}\in I$,
and an action profile $a'^{(l)} \in A$,
such that the following conditions hold:
\begin{enumerate}
\item[(E.1)]   The expected uniform min-max value upon playing an exit is at least $c$: $\sum_{l=1}^L \beta^{(l)} u^*(s^{(l)},a^{(l)}) \geq c$.
\item[(E.2)]  The pair $(s^{(l)},a'^{(l)})$ is not an exit from $C$, that is, $q(C \mid s^{(l)},a'^{(l)}) = 1$.
\item[(E.3)]  The action pairs $a^{(l)}$ and ${a'}^{(l)}$ differ in the action of a single player: $a_i^{(l)} \neq {a'}_i^{(l)}$ if and only if $i=i^{(l)}$.
\end{enumerate}
Then there is a strategy profile $\sigma$ that is defined as long as the play remains in $C$ and satisfies the following properties:
\begin{enumerate}
\item[(F.1)]   The strategy profile $\sigma$ can be implemented by automata with size $|C| \times |I|$.
\item[(F.2)]   For every initial state $s^1 \in C$, under $\sigma$ the play leaves $C$ with probability 1, that is,  $\prob_{s^1,\sigma}(\nu_{C^c} < \infty) = 1$.
\item[(F.3)]   For every initial state $s^1 \in C$, under $\sigma$ the expected uniform min-max value upon leaving $C$ is at least $c$, that is,
$\E_{s^1,\sigma}[v^1(s^{\nu_{C^c}}] \geq c$.
\end{enumerate}
\end{proposition}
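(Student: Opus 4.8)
The plan is to imitate the iterative construction in the proof of Proposition \ref{proposition:4}, replacing its ``enter an irreducible set and linger there'' step by a ``walk to an exit state and leak out of $C$ at a controlled rate'' step. First I would reduce the number of exits: since (E.1) asserts that the convex combination $\sum_{l=1}^L \beta^{(l)} u^*(s^{(l)},a^{(l)}) \in \dR^I$ dominates $c$, Carath\'eodory's Theorem lets me assume $L \le |I|$ (exactly as in Proposition \ref{proposition:4}), retaining for each surviving index $l$ its data $i^{(l)}$ and ${a'}^{(l)}$. For a small parameter $\delta>0$ I would then define $\sigma^\delta$ on histories that stay in $C$ by a cyclic rule over a mode counter $l\in\{1,\dots,L\}$: in mode $l$, while the current state differs from $s^{(l)}$ the players follow the pure stationary profile $y_{\{s^{(l)}\};C}$ that leads the play to $s^{(l)}$ inside $C$; once at $s^{(l)}$ they play the one-shot perturbation in which player $i^{(l)}$ mixes $a^{(l)}_{i^{(l)}}$ with small weight against ${a'}^{(l)}_{i^{(l)}}$ with the complementary weight, while every other player $j$ plays the action $a^{(l)}_j={a'}^{(l)}_j$ (common by (E.3)), so the realized profile is $a^{(l)}$ with small probability and ${a'}^{(l)}$ otherwise. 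By (E.2) the profile ${a'}^{(l)}$ keeps the play in $C$, so the only route out of $C$ is through the exit $(s^{(l)},a^{(l)})$; after this single stage the counter advances to $l+1\ (\mathrm{mod}\ L)$. Taking the weight on $a^{(l)}$ proportional to $\delta\beta^{(l)}$ is the device of Lemma \ref{lemma:14} for tuning the rate at which the play leaks out through each prescribed exit.

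Properties (F.1) and (F.2) I expect to be routine. For (F.1), an automaton implementing $\sigma^\delta$ needs only to remember the current state of the game (an element of $C$) and the current mode $l$; since $L\le |I|$, its size is at most $|C|\times|I|$, while random transitions handle both the mode advancement and the perturbation. For (F.2), each full sweep through the $L$ modes leaves $C$ with a probability bounded below by a positive constant, since the exit $(s^{(l)},a^{(l)})$ fires with positive probability and $Q_C>0$ from Eq.~(\ref{equ:Q}) bounds the chance that it actually leaves $C$; the geometric tail then forces $\prob_{s^1,\sigma^\delta}(\nu_{C^c}<\infty)=1$.

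The heart of the matter, and the step I expect to be the main obstacle, is (F.3). The key structural fact is (C.3): $v^1(\cdot)\equiv v(C)$ is constant on $C$, so the process $v^1(s^n)$ equals $v(C)$ for every $n<\nu_{C^c}$ and changes only when an exit action is played. Thus $v^1(s^n)$ behaves like a martingale killed at $\nu_{C^c}$ whose only nonzero expected increments occur at stages where $a^{(l)}$ is played, each such increment having conditional mean $u^*(s^{(l)},a^{(l)})-v(C)$ --- here I use that ${a'}^{(l)}$ stays in $C$, so its $u^*$-value is exactly $v(C)$, and that any other in-$C$ transition also has $u^*$-value $v(C)$. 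Applying the optional stopping theorem at $\nu_{C^c}$ then yields an identity of the form $\E_{s^1,\sigma^\delta}[v^1(s^{\nu_{C^c}})]=v(C)+\sum_{l=1}^L N_l\bigl(u^*(s^{(l)},a^{(l)})-v(C)\bigr)$, where $N_l$ is the expected number of times $a^{(l)}$ is played before the play leaves $C$, subject to the normalisation that exactly one exit eventually succeeds. I would then choose $\delta$ and the mixing weights so that, as $\delta\to 0$, the induced weights reproduce the distribution $\beta$ in the sense needed to turn the right-hand side, via (E.1), into a quantity that is at least $c$. The delicate point is that an exit action need not leave $C$ on the stage it is played --- it does so only with conditional probability $q(C^c\mid s^{(l)},a^{(l)})$ --- so one must carefully relate the expected continuation value at the genuine exit time $\nu_{C^c}$ to the $u^*$-values in (E.1), using the constancy of $v^1$ on $C$ to absorb the stages on which an exit action fails and the play returns to $C$. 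Letting first $\delta\to0$ and then sending the approximation level of the underlying exits to $0$ should deliver (F.3) with the stated constant $c$.
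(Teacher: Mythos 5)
Your construction coincides with the paper's: after the same Carath\'eodory reduction to $L\le|I|$, the paper's strategy $\sigma(\vec\eta)$ also cycles through modes $l=1,\ldots,L$, walks to $s^{(l)}$ using $y_{\{s^{(l)}\};C}$, plays $z^{(l)}(\eta^{(l)})=(1-\eta^{(l)}){a'}^{(l)}+\eta^{(l)}a^{(l)}$ there (only player $i^{(l)}$ actually mixes, by (E.3)), and advances the mode whenever the play stays in $C$; your treatment of (F.1) and (F.2) matches the paper's. (Incidentally, the tuning device you attribute to Lemma \ref{lemma:14} is described in the text preceding that lemma; the lemma itself only supplies the companion action ${a'}^{(l)}$.) Your optional-stopping identity $\E_{s^1,\sigma}\bigl[v^1(s^{\nu_{C^c}})\bigr]=v(C)+\sum_{l=1}^L N_l\bigl(u^*(s^{(l)},a^{(l)})-v(C)\bigr)$, resting on the constancy of $v^1$ on $C$ from (C.3), is correct and in fact more explicit than anything in the paper's one-line conclusion of this step.

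The genuine gap lies exactly where you depart from the paper: the choice of the weights. The paper does not set $\eta^{(l)}\propto\delta\beta^{(l)}$ and send $\delta\to0$; it fixes a total mass $\eta_1$ and runs a Brouwer fixed-point argument on the compact convex set $X(\eta_0,\eta_1)$ with the vector field $\xi(\vec\eta)=\beta-\beta(\vec\eta)$ (checking $\sum_l\xi^{(l)}(\vec\eta)=0$ and $\xi^{(l)}(\vec\eta)>0$ when $\eta^{(l)}=\eta_0$), producing a single $\vec\eta_*$ whose induced exit distribution equals $\beta$ \emph{exactly}, whence the paper concludes (F.3) from (E.1). This exactness is not cosmetic, for two reasons. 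First, the map $\vec\eta\mapsto\beta(\vec\eta)$ is nonlinear --- exits earlier in the cycle shade later ones, and a played exit $(s^{(l)},a^{(l)})$ actually leaves $C$ only with probability $q(C^c\mid s^{(l)},a^{(l)})$, which varies with $l$ --- so proportional weights reproduce $\beta$ only up to $O(\delta)$; each fixed $\sigma^\delta$ then satisfies (F.3) only up to an error vanishing with $\delta$, whereas the proposition asserts one strategy profile achieving the bound $\geq c$ itself, and this exact form is what Section \ref{section:prf} invokes (the submartingale property of $(v^1_i(s^{k_n}))$ with a fixed $\ep$); ``letting $\delta\to0$'' is not a strategy profile, and nothing in your argument shows the error has a favorable sign for small $\delta$. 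Second, your own flagged obstacle is not closed by your plan: the normalization in your identity is $\sum_l N_l\, q(C^c\mid s^{(l)},a^{(l)})=1$, so with your weights $N_l\to\beta^{(l)}\big/\sum_j\beta^{(j)}q(C^c\mid s^{(j)},a^{(j)})$, and the resulting bound is $v(C)+\bigl(\sum_l\beta^{(l)}(u^*(s^{(l)},a^{(l)})-v(C))\bigr)\big/\sum_j\beta^{(j)}q(C^c\mid s^{(j)},a^{(j)})$, which (E.1) alone does not place above $c$ when the deficit term is negative --- precisely the typical case $c=v(C)-\ep$; you acknowledge this by ending with ``should deliver.'' The repair is the paper's missing ingredient: keep your construction and your accounting identity, but choose $\vec\eta$ by the Brouwer fixed-point step so that the exit distribution is matched on the nose rather than in the $\delta\to0$ limit.
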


\begin{proof}
The idea of the proof is as follows.
Carath\'eodory's Theorem implies that we can assume w.l.o.g.~that $L \leq |I|$.
For each $l \in \{1,2,\ldots,L\}$ we use $|C|$ automaton-states to implement each of the stationary strategies $(x^{(l)}_i)_{i \in I}$, one for each state in $C$:
in all automaton-states that correspond to states in $C \setminus \{s^{(l)}\}$, the players play a pure stationary strategy profile that
ensures that the play reaches $s^{(l)}$.
In the automaton-state that corresponds to state $s^{(l)}$, each player $i \neq i^{(l)}$ plays $a_i^{(l)}$
while player $i^{(l)}$ plays $(1-\eta^{(l)})a'^{(l)}_i + \eta^{(l)} a^{(l)}$, for a properly chosen $\eta^{(l)} \in (0,1)$,
thereby ensuring that with positive probability the play leaves $C$.
If at state $s^{(l)}$ the play does not leave $C$, then the automaton moves to an automaton-state that implements $x^{(l+1)}$.
The probability $\eta^{(l)}$ to play $a^{(l)}$ is chosen so that the overall probability to exit $C$ through $(s^{(l)},a^{(l)})$ is $\beta^{(l)}$.

We now turn to the formal proof.
For every $\eta \in [0,1]$ let $z^{(l)}(\eta)$ be the mixed-action profile at $s^{(l)}$ defined by
$z^{(l)}(\eta) := (1-\eta)a'^{(l)} + \eta a^{(l)}$.
For every collection of numbers in the unit interval $\vec\eta = (\eta^{(l)})_{l=1}^L$ let $\sigma(\vec\eta)$
be the strategy profile that is defined as long as the play remains in $C$, as follows:
\begin{enumerate}
\item   Set $l=1$.
\item   Play the stationary strategy profile $y_{\{s^{(l)}\};C}$ until the play reaches the state $s^{(l)}$.
\item   At state $s^{(l)}$ play the mixed action profile $z^{(l)}(\eta^{(l)})$.
\item   If the play remains in $C$, increase $l$ by 1 and go to Step 2.
\end{enumerate}

The strategy profile $\sigma(\vec\eta)$ can be implemented by automata with size $L \times|I|$.
As soon as $\sum_{l=1}^L \eta^{(l)} > 0$ and $s^1 \in C$, the play leaves $C$ with probability 1, that is, $\prob_{s^1,\sigma(\vec\eta)}(\nu_{C^c} < \infty) = 1$.
Moreover, under $\sigma(\vec\eta)$ with probability 1 the play leaves $C$ through one of the exits $(s^{(l)},a^{(l)})_{l=1}^L$.

Let $\beta(\vec\eta) \in \Delta(\{1,2,\ldots,L\})$ be the probability distribution over the exits
$\{(s^{(l)},a^{(l)}), l=1,2,\ldots,L\}$ induced by $\sigma(\vec\eta)$.
We argue that there exists $\vec\eta_* \in [0,1]^I$ such that $\beta(\vec\eta_*) = \beta$.
Indeed, fix $\eta_1 \in (0,1)$  and $\eta_0>0$ sufficiently small, and consider the convex and compact set
\[ X(\eta_0,\eta_1) := \left\{ \vec\eta \in [0,\eta_0]^I \colon \sum_{l=1}^L \eta^{(l)} = \eta_1, \eta^{(l)} \geq \eta_0 \ \ \ \forall l=1,2,\ldots,L\right\}. \]
Define a vector field $\xi$ on $X(\eta_0,\eta_1)$ by
\[ \xi(\vec\eta) := \beta - \beta(\vec\eta). \]
One can verify that $\sum_{l=1}^L \xi^{(l)}(\vec\eta) = 0$ for every $\vec\eta \in X(\eta_0,\eta_1)$,
and $\xi^{(l)}(\vec\eta) > 0$ whenever $\eta^{(l)} = \eta_0$, provided $\eta_0$ is sufficiently small.
By Brouwer's Fixed Point Theorem this implies that there is $\vec\eta_* \in X(\eta_0,\eta_1)$ such that $\xi(\vec\eta_*) = 0$, as claimed.
Since $\beta(\vec\eta^*) = \beta$, the strategy profile $\sigma(\vec\eta^*)$ satisfies the desired properties.
\end{proof}

\begin{proposition}
\label{proposition:3a}
Let $C \in \calC_{\hbox{max}}$ be a maximal communicating set of type B.
Then the conclusion of Proposition \ref{proposition:3} holds.
\end{proposition}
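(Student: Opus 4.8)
The plan is to produce the data required by Proposition \ref{proposition:3} --- a number $L$, a distribution $\beta \in \Delta(\{1,\ldots,L\})$, exits $(s^{(l)},a^{(l)})$, players $i^{(l)}$, and action profiles $a'^{(l)}$ satisfying (E.1)--(E.3) --- and then simply invoke that proposition. The natural source for this data is the fixed sequence $(\widehat\sigma^\ep)_{\ep>0}$ of Proposition \ref{theorem:sv}: I would read off $\beta$ from the limiting first-exit distribution of $\widehat\sigma^\ep$ out of $C$, and take $c := v(C)$, the common uniform min-max value on $C$ (recall (C.3)).

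First I would fix an initial state $s^1 \in C$. Since $C$ has type B, Definition \ref{def:types} applied to the trivial history $h^1=(s^1)$ gives $\lim_{\ep\to 0}\prob_{s^1,\widehat\sigma^\ep}(\nu_{C^c}=\infty)=0$; as $\nu^*_C \le \nu_{C^c}$, a first exit is played with probability tending to $1$. Passing to a subsequence so that all the (finitely many) limits in Eq.~(\ref{equ:60}) exist, let $(s^{(l)},a^{(l)})_{l=1}^L$ enumerate the exits with $\beta^{(l)}:=\mu(s^1,(\widehat\sigma^\ep)_{\ep>0},C;s^{(l)},a^{(l)})>0$. Because the first-exit probabilities sum to $1$ in the limit, $\beta\in\Delta(\{1,\ldots,L\})$. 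Conditions (E.2) and (E.3) are then immediate from Lemma \ref{lemma:14}: its hypotheses hold because $\widehat\sigma^\ep$ is an $\ep$-perturbation of $E$ by (SV.1) and the limit (\ref{equ:60}) exists along the subsequence, so for each $l$ it supplies an action profile $a'^{(l)}$ with $q(C\mid s^{(l)},a'^{(l)})=1$ that differs from $a^{(l)}$ in the action of a single player $i^{(l)}$.

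The heart of the argument is verifying (E.1), i.e.\ $\sum_{l=1}^L \beta^{(l)} u^*(s^{(l)},a^{(l)}) \ge v(C)$. Here I would use Condition (SV.3). For the bounded stopping time $\nu=(\nu^*_C+1)\wedge N$ and $h^1=(s^1)$, (SV.3) gives $\E_{s^1,\widehat\sigma^\ep}[v^1(s^\nu)]\ge v^1(s^1)-\ep=v(C)-\ep$. Splitting according to whether $\nu^*_C<N$ and using that $v^1$ is bounded by $1$, letting $N\to\infty$ yields
\[ \E_{s^1,\widehat\sigma^\ep}\!\left[v^1(s^{\nu^*_C+1})\mathbf{1}_{\{\nu^*_C<\infty\}}\right] \ge v(C)-\ep-\prob_{s^1,\widehat\sigma^\ep}(\nu^*_C=\infty). \]
On $\{\nu^*_C<\infty\}$ the state $s^{\nu^*_C+1}$ is drawn from $q(\cdot\mid s^{\nu^*_C},a^{\nu^*_C})$, so the left-hand side equals $\sum_{(s,a)\in\calE(C)}\mu(s^1,\widehat\sigma^\ep,C;s,a)\,u^*(s,a)$. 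Since $\prob_{s^1,\widehat\sigma^\ep}(\nu^*_C=\infty)\le\prob_{s^1,\widehat\sigma^\ep}(\nu_{C^c}=\infty)\to 0$ and the first-exit probabilities converge to $\beta$, taking $\ep\to 0$ along the subsequence gives $\sum_{l=1}^L\beta^{(l)}u^*(s^{(l)},a^{(l)})\ge v(C)$, which is (E.1) with $c=v(C)$.

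With (E.1)--(E.3) established, Proposition \ref{proposition:3} produces a strategy profile $\sigma$ satisfying (F.1)--(F.3) with $c=v(C)$, which is exactly the asserted conclusion. The crucial point --- and the step I expect to be most delicate --- is (E.1): one must combine the bounded-stopping-time inequality (SV.3) with a truncation that is harmless precisely because type B forces $\prob_{s^1,\widehat\sigma^\ep}(\nu^*_C=\infty)\to 0$, while the constancy $v^1\equiv v(C)$ on $C$ from (C.3) is what lets the value at time $\nu^*_C+1$ be identified with the $u^*$-values of the exits.
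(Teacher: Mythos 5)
Your proposal is correct and follows the paper's proof essentially verbatim: both fix a history ending in $C$, use the type-B property to make the limit first-exit distribution $\mu(s^1,(\widehat\sigma^\ep)_{\ep>0},C;\cdot)$ well defined and of total mass one, extract $a'^{(l)}$ and $i^{(l)}$ from Lemma \ref{lemma:14} via (SV.1), verify (E.1) from (SV.3), and then invoke Proposition \ref{proposition:3}. Your truncated-stopping-time computation establishing (E.1) with $c=v(C)$ is precisely the detail the paper compresses into its one-line appeal to Condition (SV.3), so it is a faithful (and more complete) rendering of the same argument.
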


\begin{proof}
Fix a finite history $h^n \in H$ for which $s^n \in C$.
Since the set has type B, $\lim_{\ep \to 0} \prob_{s^n,\widehat\sigma^\ep}(\nu_{C} =\infty) = 0$.
By taking a subsequence, we can assume w.l.o.g.~that the limit exit distribution $\mu(s^n,(\widehat\sigma^\ep_{h^n})_{\ep > 0},C;s,a)$ exists.
By Lemma \ref{lemma:14}, for every exit $(s,a) \in \calE(C)$ there is an action profile $a'$ that satisfies Conditions (W.1) and (W.2).
From Condition (SV.3), Conditions (E.1)--(E.3) of Proposition \ref{proposition:3} hold with $\beta=\mu(s^n,(\widehat\sigma^\ep_{h^n})_{\ep > 0},C;s,a)$,
and the conclusion of the proposition holds as well.
\end{proof}

\subsection{The Construction of a Min-Max $\ep$-Acceptable Strategy Profile}
\label{section:prf}

We are now ready to define a min-max $\ep$-acceptable strategy profile $\sigma^{*,\ep}$.
This strategy profile will play stationarily in transient states under $E$.
Moreover, for every maximal communicating set $C$ under $E$,
whenever the play enters $C$ the strategy profile $\sigma^{*,\ep}$ will play in the same way.
We therefore define a sequence $(k_n)_{n \in \dN}$ of stopping times that indicates when the play enters a maximal communicating set
or visits a transient state. That is, we will define
$k_{n+1}$ to be the first stage after stage $k_n$ in which either (a) the state at stage $k_{n+1}$ is a transient state under $E$,
or (b) the state at stage $k_{n+1}$ belongs to a maximal communicating set under $E$ that does not contains the state at stage $k_n$.
Formally, set
\[ k_1 := 1, \]
and for every $n \geq 1$ set
\[ k_{n+1} := \min\{ m > n \colon s^m \not\in \calC^*, \hbox{ or } s^m \in C \in \calCmax \hbox{ and } s^n \not\in C\}. \]
Note that if $s^{k_n} \not\in C^*$ then $k_{n+1} = k_n+1$.

Recall that $\widehat\sigma^\ep$ is a strategy profile that satisfies the conclusion of Proposition \ref{theorem:sv}, for every $\ep > 0$.
Denote by $x$ a stationary strategy profile that satisfies Lemma \ref{lemma:transient};
this strategy profile ensures that the play reaches a communicating set.

We now turn to the formal definition of $\sigma^{*,\ep}$.
For every $n \in \dN$, define
\begin{itemize}
\item
If $s^{k_n} \not\in C^*$, at stage $n$ the strategy profile $\sigma^{*,\ep}$ coincides with $x(s^{k_n})$, that is, $\sigma^{*,\ep}(h^{k_n}) := x(s^{k_n})$.
\item
Suppose that $s^{k_n} \in C \in \calCmax$ and $C$ is a maximal communicating set of type A.
By Propositions \ref{prop:10} there is a strategy profile $\sigma^{(1)}$ that satisfies the conclusion of Proposition \ref{proposition:4}
with $c = (c_i)_{i \in I}$ that is defined by $c_i := v_i(C)-\ep$ for each player $i \in I$, provided the initial state is in $C$.
The conditional strategy profile $\sigma^{*,\ep}_{h^{k_n}}$ coincides with the strategy profile $\sigma^{(1)}$.
Note that in this case the play under $\sigma^{*,\ep}_{h^{k_n}}$ will never leave $C$, that is, $k_{n+1} = \infty$.
\item
Suppose that $s^{k_n} \in C \in \calCmax$  and $C$ is a maximal communicating set of type B.
By Proposition \ref{proposition:3a} there is a strategy profile $\sigma^{(2)}$ that satisfies the conclusion of Proposition \ref{proposition:3}.
The conditional strategy profile $\sigma^{*,\ep}_{h^{k_n}}$ coincides with the strategy profile $\sigma^{(2)}$
until an exit i splayed for the first time.
Note that in this case with probability 1 the play under $\sigma^{*,\ep}_{h^{k_n}}$ eventually leaves $C$.
\end{itemize}

\begin{lemma}
Under the strategy profile $\sigma^{*,\ep}$, with probability 1 the play reaches a maximal communicating set of type A.
\end{lemma}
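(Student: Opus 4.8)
The plan is to read off from the construction that the block-entry process $Y_n:=s^{k_n}$ is a time-homogeneous finite Markov chain and then to show that its only recurrent classes are the (absorbing) type~A sets. Indeed, $\sigma^{*,\ep}$ applies the same rule each time the play enters a given maximal communicating set or transient state: on transient states it plays the stationary $x$ of Lemma~\ref{lemma:transient}, so the play reaches $C^*$ in finitely many steps; on a type~A set it plays $\sigma^{(1)}$ and never leaves, so such a set is absorbing for $Y_n$; and on a type~B set it plays $\sigma^{(2)}$, which by Proposition~\ref{proposition:3} leaves the set with probability one, with an exit distribution (namely the limiting distribution supplied by Proposition~\ref{proposition:3a}) depending only on the set. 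A finite Markov chain reaches a recurrent class with probability one, and since type~A sets are absorbing, a recurrent class meeting a type~A set equals that singleton. Hence it suffices to prove that no recurrent class consists solely of type~B sets and transient states.

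First I would suppose the contrary: let $\mathcal R$ be such a class, let $\tilde{\mathcal R}$ be the set of states it visits, and let $C^\dagger$ be a maximal communicating set in $\mathcal R$ maximizing $\sum_{i} v^1_i(\cdot)$ over $\tilde{\mathcal R}$. By the submartingale inequality~(\ref{equ:70}) the value of every transient state in $\tilde{\mathcal R}$ is dominated by the values of the communicating sets it can reach, so $\sum_i v^1_i(C^\dagger)$ is the top value level of $\tilde{\mathcal R}$. The key device is a value-slack estimate from Condition~(SV.3), run over the whole sequence $(\widehat\sigma^\ep)_{\ep>0}$ rather than for the fixed $\ep$ used to build $\sigma^{*,\ep}$: starting $\widehat\sigma^\ep$ at a history ending in $C^\dagger$ and taking $\nu$ to be the first stage (truncated at $M$) at which $\sum_i v^1_i(s^m)$ falls below the top level by the minimal positive gap $\delta$ between distinct value levels in $\tilde{\mathcal R}$, summing~(SV.3) over players gives $\sum_i v^1_i(C^\dagger)-\delta\,\prob(\nu\le M)\ge \sum_i v^1_i(C^\dagger)-|I|\ep$; letting $M\to\infty$ yields $\prob_{\widehat\sigma^\ep}(\text{the value ever drops a level})\le |I|\ep/\delta$. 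Thus, with probability at least $1-|I|\ep/\delta$, the play started in $C^\dagger$ remains forever among the top-level states of $\mathcal R$.

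Next I would collapse the top level to a single set. The set of top-level states is closed under $E$ (from a top-level state every $E$-action keeps $\sum_i v^1_i$ at the maximum, by~(\ref{equ:70}) and maximality) and has constant value-sum, yielding Conditions (C.1) and (C.3); the estimate above should supply Condition (C.2), since a passage between top-level sets that risked a value drop with probability bounded away from zero would, repeated within the recurrent class, make the drop certain and contradict the bound. Hence the top level is communicating under $E$, so it is contained in a single element of $\calCmax$ and therefore equals $C^\dagger$. The estimate then reads $\prob_{s,\widehat\sigma^\ep}(\nu_{(C^\dagger)^c}=\infty)\ge 1-|I|\ep/\delta$ for $s\in C^\dagger$, so $\limsup_{\ep\to0}\prob_{s,\widehat\sigma^\ep}(\nu_{(C^\dagger)^c}=\infty)>0$ and $C^\dagger$ is of type~A, contradicting the choice of $\mathcal R$.

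The hard part is the interface between the two processes: the recurrent-class structure belongs to the constructed $\sigma^{*,\ep}$, whereas the value-slack engine rests on Condition~(SV.3), a property of $\widehat\sigma^\ep$. I would bridge this through Proposition~\ref{proposition:3a}, since the exits of $\sigma^{*,\ep}$ from a type~B set realize the limiting exit distribution of $\widehat\sigma^\ep$, so the chain's transitions and the $\widehat\sigma^\ep$-dynamics agree in the limit $\ep\to0$. The delicate point I expect to fight with is that exits of $\widehat\sigma^\ep$ carrying limiting probability zero could still leak the play out of $\tilde{\mathcal R}$ once accumulated over the infinite horizon; controlling this leakage, handling top-level transient states, and making the constant-value collapse in Condition (C.2) rigorous are where the genuine work lies.
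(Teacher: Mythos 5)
Your opening reduction is sound and matches the paper's first step: under $\sigma^{*,\ep}$ the block-entry process is indeed a finite Markov chain (the exit distribution produced by Proposition \ref{proposition:3} does not depend on the entry state, since the automaton restarts at $l=1$ and reaches $s^{(1)}$ almost surely), type~A sets are absorbing, and the lemma reduces to excluding a recurrent class made solely of type~B sets and transient states --- precisely the paper's ``closed subset of transient states and maximal communicating sets of type B.'' The genuine gap is the one you flag in your final paragraph, and it is not a peripheral technicality but the load-bearing step. Your value-slack estimate runs under $\widehat\sigma^\ep$, while the level structure (the top value $W^\dagger:=\sum_i v^1_i(C^\dagger)$ and the gap $\delta$) is defined from $\widetilde{\mathcal{R}}$, which is closed only under the \emph{limiting} dynamics: for each fixed $\ep>0$ the profile $\widehat\sigma^\ep$ gives positive probability per sojourn to exits whose limit probability is zero, and recurrence means infinitely many sojourns, so these leaks accumulate and under $\widehat\sigma^\ep$ the play can leave $\widetilde{\mathcal{R}}$ --- possibly to states whose value sum exceeds $W^\dagger$. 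Your displayed inequality $W^\dagger-\delta\,\prob(\nu\le M)\ge W^\dagger-|I|\ep$ therefore presupposes $\sum_i v^1_i(s^m)\le W^\dagger$ along the play, i.e., exactly the confinement you are trying to establish; redoing the computation with the correct upper bound $V_{\max}:=\max_{s\in S}\sum_i v^1_i(s)$ yields only $\prob(\nu\le M)\le \bigl(V_{\max}-W^\dagger+|I|\ep\bigr)/\bigl(V_{\max}-W^\dagger+\delta\bigr)$, which does not vanish as $\ep\to 0$ when $V_{\max}>W^\dagger$. Two further steps of your ``collapse'' are also unproved: constancy of the \emph{sum} $\sum_i v^1_i$ does not give Condition (C.3), which requires equality of the full vectors $v^1(s)=v^1(s')$; and your justification of (C.2) for the top level is a heuristic, not an argument. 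An acknowledged hole is still a hole.

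For contrast, the paper closes the argument with no value bookkeeping at all. The hypothetical closed structure $T\cup C_1\cup\cdots\cup C_L$ --- closed under $E$ at the transient states and under the positive-limit exits from each $C_l$, where those exits are genuine action profiles so that Condition (C.2) requires no $E$-membership --- contains, as in the proof of Lemma \ref{lemma:exist}, a minimal closed set of this combined dynamics, and that set is a communicating set under $E$. It either lies inside $T$, contradicting that every communicating set under $E$ is contained in $C^*$, or strictly contains some $C_l$, contradicting maximality in $\calCmax$. This structural contradiction is immune to the $\ep$-accumulation problem because it uses the limiting exit distributions only through their supports, whereas your type-reclassification route (showing $C^\dagger$ is really of type~A) would require confining $\widehat\sigma^\ep$ to $\widetilde{\mathcal{R}}$ uniformly over the infinite horizon --- a task essentially as hard as the lemma itself.
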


\begin{proof}
Assume to the contrary that the claim does not hold.
Since under $\sigma^{*,\ep}$ the play reaches a maximal communicating set with probability 1,
the assumption implies that there is a closed subset of transient states and maximal communicating sets of type B.
That is, there is a collection $\{C_1,C_2,\cdots,C_L\}$ of maximal communicating sets under $E$ of type B
and a subset $T \subseteq S \setminus C^*$ of transient states,
such that
\begin{itemize}
\item   $q\left(\left(\cup_{l=1}^L C_l\right) \cup T \mid s,x(s)\right) = 1$ for every state $s \in T$ and every mixed action profile $x(s) \in E(s)$.
\item
For every $l=1,2,\ldots,L$
there exists a finite history $h^{n_l}(l) \in H$ satisfying $s^{n^l}(l) \in C_l$,
such that for every exit $(s,a) \in \calE(C)$ that satisfies $\mu(s^{n_l}(l),(\widehat\sigma^\ep_{h^{n_l}(l)})_{\ep > 0},C;s,a) > 0$
we have $q\left(\left(\cup_{l=1}^L C_l\right) \cup T \mid s,a\right) = 1$.
\end{itemize}
This implies that either there exists a communicating set under $E$ which is a subset of $T$,
or there exists a communicating set under $E$ that strictly contains one of the sets $C_1,C_2,\ldots,C_L$.
The first alternative contradicts the fact that $C^*$ contains all maximal communicating sets,
while the second alternative contradicts the fact that $C_1,\cdots,C_L$ are maximal communicating sets.
\end{proof}

\bigskip

Define the stopping time $N$ as the minimal integer $n$ such that $s^{k_n}$ belongs to a maximal communicating set of type A:
\[ N := \min\{n \in \dN \colon s^{k_n} \in C \in \calCmax, C \hbox{ has type A}\}. \]
The definition of the stationary strategy profile $x$ and the definition of $\sigma^{*,\ep}$ on maximal communicating sets of type B
(see Proposition \ref{proposition:3a}) imply that the value process is a submartingale, that is,
for every player $i \in I$,
the sequence $(v_i^1(s^{k_n}))_{n =1}^{N}$ is a submartingale under $\sigma^{*,\ep}$.

Together with Proposition \ref{prop:10} and Eq.~(\ref{equ:8.1})
we now deduce that the strategy profile $\sigma^{*,\ep}$ is min-max $\ep$-acceptable.

\subsection{The Construction of a Stationary Correlated Min-Max $\ep$-Acceptable Strategy}
\label{proof:2}

We here show how to amend the proof to prove Theorem \ref{theorem:2}.
Since in transient states the play is already stationary, we need to amend the play only in communicating sets.

Fix then a maximal communicating set of type A and consider the proof of Proposition \ref{prop:10}.
Using Theorem \ref{theorem:mt} instead of Proposition \ref{prop:mt} we obtain a correlated stationary strategy
that yields to the players a high payoff.

Fix now a maximal communicating set $C$ of type B and consider the proof of Proposition \ref{proposition:3}.
Plainly there is a correlated stationary strategy $z$ that ensures that the play visits every state in $C$ infinitely often.
One such profile is choosing at every stage one of the pure stationary strategy profiles $y_{\{s^{(l)}\},C}$ with a uniform distribution.
We now argue that there is a correlated stationary strategy $\tau$ that satisfies Conclusions (F.1)--(F.3) of Proposition \ref{proposition:3}.
Indeed, consider a state $s \in C$.
If there is no exit at $s$ with positive probability, that is, $s \neq s^{(l)}$ for every $l$,
we define $\tau(s) = z(s)$.
Otherwise, we define $\tau(s)$ to be a convex combination of $z(s)$ and the action profiles $a^{(l)}$ for which $s^{(l)} = s$.
The weight of each action profile $a^{(l)}$ is determined in such a way that the probability that the play leaves $C$ through the exit $(s^{(l)},a^{(l)})$
is $\beta^{(l)}$.
Details are standard hence omitted.

\subsection{Implication of Our Technique to Correlated Equilibrium}
\label{subsection:implication}

Solan (2001) provides two conditions that ensure that a strategy profile $\sigma$ can be transformed into an extensive-form correlated $\ep$-equilibrium.
These conditions are:
\begin{enumerate}
\item[(S.1)]   The limit payoff $\lim_{\lambda \to 0} \gamma^\lambda(s^n,\sigma_{h^n})$ exist for every finite history $h^n \in H$.
\item[(S.2)]   For every finite history $h^n \in H$, every player $i\in I$, and every action $a_i \in A_i$ we have
\[ \lim_{\lambda \to 0} \gamma^\lambda_i(s^n,\sigma_{h^n}) \geq u^*_i(s^n,\sigma_{h^n,-i},a_i) - \ep. \]
\end{enumerate}
The strategy profile $\widehat\sigma^\ep$ constructed by Solan and Vieille (2002) satisfies these conditions.
In our construction, Condition (S.1) is satisfied while Condition (S.2) is not necessarily satisfied.
We now explain how to modify our construction to guarantee that Condition (S.2) is satisfied as well.
This ensures that the simple strategy profiles that we construct can be transformed into simple extensive-form correlated equilibria.

Whenever $s^n$ is a transient state Condition (S.2) holds by the definition of $E$.
We first slightly modify the definition of a communicating set under $E$:
In Condition (C.2) in Definition \ref{def:communicating} we did not impose any condition on the nature of the strategy profile that leads the play
from one state in $C$ to other states in $C$.
Change then the definition of a communicating set under $E$ by requiring that this strategy profile must be an $\ep$-perturbation of $E$.
We also modify the definition of an exit: an exit from a set $C$ is a tuple $(s,x(s),J,a_J)$ where
$s$ is a state in $C$,
$x(s)$ is a mixed action profile in $E(s)$,
$J$ is a subset of players,
and $a_J \in \times_{i \in J} A_i$ is an action profile,
such that the following conditions hold:
(a) $q(C \mid s,a_J,x_{-J}(s)) < 1$, and
(b) $q(C \mid s,a_{J'},x_{-J'}(s)) = 1$ for every strict subset $J'$ of $J$.
The set of exits is now infinite, and to be able to talk about discrete distributions, we consider a discretization of this set.
The strategy profiles $y_{D^{(l)};C}$ (see the proof of Proposition \ref{proposition:4})
and $y_{\{s^{(l)}\};C}$ (see the proof of Proposition \ref{proposition:3})
can be chosen to be $\ep$-perturbations of $E$,
and the action profile $a'$ in Lemma \ref{lemma:14} can be chosen to be a mixed action in $E(s)$,
hence Condition (S.2) is satisfied as well.

\subsection{Complexity of Finding a Min-Max $\ep$-Acceptable Strategy Profile}

Our proof allows one to construct a min-max $\ep$-acceptable strategy profile.
However, to do this one needs to be able to calculate the uniform min-max value of all players in all states.
Unfortunately, to date there is no efficient algorithm for calculating the uniform min-max value in stochastic games,
see, e.g., Condon (1994) \and Chatterjee et. al (2008)).

\end{document}